\providecommand{\U}[1]{\protect\rule{.1in}{.1in}}
\newtheorem{theorem}{Theorem}
\newtheorem{lemma}[theorem]{Lemma}
\newtheorem{remark}[theorem]{Remark}
\newenvironment{proof}[1][Proof]{\noindent\textbf{#1.} }{\ \rule{0.5em}{0.5em}}
\begin{document}

\title{\textbf{Squashed entanglement and approximate private states}}
\author{Mark M. Wilde\thanks{Hearne Institute for Theoretical Physics, Department of
Physics and Astronomy, Center for Computation and Technology, Louisiana State
University, Baton Rouge, Louisiana 70803, USA}}
\date{\today}
\maketitle

\begin{abstract}
The squashed entanglement is a fundamental entanglement measure in quantum
information theory, finding application as an upper bound on the distillable
secret key or distillable entanglement of a quantum state or a quantum
channel. This paper simplifies proofs that the squashed entanglement is an
upper bound on distillable key for finite-dimensional quantum systems and
solidifies such proofs for infinite-dimensional quantum systems. More
specifically, this paper establishes that the logarithm of the dimension of
the key system (call it $\log_{2}K$) in an $\varepsilon$-approximate private
state is bounded from above by the squashed entanglement of that state plus a
term that depends only $\varepsilon$ and $\log_{2}K$. Importantly, the extra
term does not depend on the dimension of the shield systems of the private
state. The result holds for the bipartite squashed entanglement, and an
extension of this result is established for two different flavors of the
multipartite squashed entanglement.

\end{abstract}


\section{Introduction}

The squashed entanglement has become one of the most widely studied
entanglement measures in quantum information theory, due in part to the fact
that it satisfies many of the desirable properties that researchers have
proposed should hold for an entanglement measure \cite{HHHH09}. It was
originally defined in \cite{CW04}\ and shown there to satisfy monotonicity
with respect to local operations and classical communication (LOCC),
convexity, additivity, and reduction to the entanglement entropy for pure
states. Independently, some discussions of a related definition appeared in
\cite{Tucc99,T02}. Later, several different authors proved that squashed
entanglement is asymptotically continuous \cite{AF04}, monogamous \cite{KW04},
and faithful \cite{BCY11}. Multipartite generalizations of squashed
entanglement were independently defined and explored in \cite{AHS08}\ and
\cite{YHHHOS09}, a variety of other information measures related to squashed
entanglement have been presented \cite{YHW08,SBW14,SW14}, and a detailed
investigation of squashed entanglement in infinite-dimensional quantum systems
appeared in \cite{Shir16}. In spite of all of the properties that squashed
entanglement possesses, it is not known whether the quantity is computable in
the Turing sense.

One of the most valuable properties that squashed entanglement possesses is
that it is an upper bound on the distillable entanglement of a bipartite state
\cite{CW04}. This result was later strengthened in
\cite{Chr06,CEHHOR07,Christandl2012}:\ squashed entanglement is also an upper
bound on the distillable secret key of a bipartite state. These results were
further strengthened in \cite{TGW14IEEE}, where the squashed entanglement of a
quantum communication channel was defined and shown to be an upper bound on
the secret key agreement capacity of a quantum channel (i.e., the maximum rate
at which secret key can be distilled by two parties connected by a quantum
channel and free public classical communication links). Multipartite
generalizations of these results are available in \cite{YHHHOS09,STW16}.

The original proof that the squashed entanglement is an upper bound on the
distillable key of a bipartite state $\rho_{AB}$ contained a rather slight
ambiguity \cite[Proposition~4.19]{Chr06}, which was later clarified in
\cite{CEHHOR07,Christandl2012}. At first glance, the issue might appear to be
somewhat technical, but it is in fact critical for having a complete proof of
this result. It is worthwhile to point out that no such issue exists in
various proofs that the relative entropy of entanglement is an upper bound on
distillable key \cite{HHHO05,HHHO09,WTB16}, due to the proof of
\cite[Theorem~9]{HHHO09} and related bounds.

To spell out the issue in more detail, consider that the goal of any key
distillation protocol is for two parties (Alice $A$ and Bob $B$) to act on $n$
independent copies of a shared bipartite state $\rho_{AB}$ using LOCC in order
to distill a so-called private state \cite{HHHO05,HHHO09}, which consists of
two components: key systems and shield systems. Alice and Bob's distilled key
is placed in the key systems, and the shield systems are extra systems
inaccessible to any third eavesdropping party (Eve) who possesses a purifying
system of $\rho_{AB}^{\otimes n}$ and can keep a local copy of all classical
communication exchanged between Alice and Bob during the protocol. The shield
systems are not in the possession of Eve, their purpose being to protect the
key systems from Eve. However, in such a general protocol for key
distillation, the dimension of the shield systems can be arbitrarily large.
This aspect of the protocol is what led to a slight ambiguity in the proof
from \cite[Proposition~4.19]{Chr06}, wherein a parameter $d$ is stated, but it
is left unclear as to whether this is equal to the dimension of the key
systems or the dimension of the key and shield systems combined. Interpreting
the proof there, the only option seems to be that $d$ is equal to the
dimension of the combined key and shield systems, in which case the proof
given in \cite[Proposition~4.19]{Chr06} does not generally establish that
squashed entanglement bounds distillable key from above (i.e., there could
exist a sequence of key distillation protocols resulting in shield systems
with a dimension growing larger than an exponential in $n$, and in such a case
the proof does not establish squashed entanglement as an upper bound on
distillable key). This ambiguity was later resolved in
\cite{CEHHOR07,Christandl2012} for finite-dimensional quantum states, by
noting that all such sequences of protocols can be simulated by ones in which
the shield systems are growing no larger than an exponential in $n$. This
latter argument resolves the aforementioned problem for key distillation
protocols operating on finite-dimensional quantum states, but there is still a
gap left open for such protocols operating on infinite-dimensional quantum
states, since the shield systems in this latter context are inherently
infinite-dimensional. At the same time, it seems desirable at a fundamental
level for the proof to hold regardless of the dimension of the shield systems
(i.e., without the need for a simulation argument).

The present paper settles this issue, which has the simultaneous effect of
1)\ simplifying the proof that the squashed entanglement of a
finite-dimensional state or channel is an upper bound on its distillable key
and 2) solidifying the proof that the same is true for an infinite-dimensional
state or channel. In particular, one of the main results of this paper is that
the logarithm of the dimension of one key system (call it $\log_{2}K$) of an
$\varepsilon$-approximate private state is bounded from above by its squashed
entanglement plus a term that depends only $\varepsilon$ and $\log_{2}K$. The
important point here is that the upper bound has no dependence on the
dimension of the shield systems of the $\varepsilon$-approximate private
state. See Theorem~\ref{thm:bipartite-bound} for a precise statement of the
result. With this new result in hand, we provide a brief review of the proof
that squashed entanglement is an upper bound on distillable key. This paper
also delivers similar results for multipartite squashed entanglements (see
Theorems~\ref{thm:multipartite-bound-1} and \ref{thm:multipartite-bound-2} for
precise statements). The upshot is a full justification of the original
statements from \cite{TGW14IEEE,TGW14Nat,STW16} and the follow-up statements
in \cite{GEW16,AML16,AK16}, regarding distillation of secret key using bosonic
quantum Gaussian channels.

In the next section, we review some preliminary material needed to understand
the main results of the paper. After that, we proceed to establishing proofs
of the main results: Theorems~\ref{thm:bipartite-bound},
\ref{thm:multipartite-bound-1}, and~\ref{thm:multipartite-bound-2}.

\section{Preliminaries}

Much of the background on quantum information theory reviewed here is
available in \cite{W15book}, with the exception of private states and squashed entanglement.

\subsection{Quantum states}

Let $\mathcal{L}(\mathcal{H})$ denote the algebra of bounded linear operators
acting on a Hilbert space $\mathcal{H}$. Let $\mathcal{L}_{+}(\mathcal{H})$
denote the subset of positive semi-definite operators. An\ operator $\rho$ is
in the set $\mathcal{D}(\mathcal{H})$\ of density operators (or states) if
$\rho\in\mathcal{L}_{+}(\mathcal{H})$ and Tr$\left\{  \rho\right\}  =1$. The
tensor product of two Hilbert spaces $\mathcal{H}_{A}$ and $\mathcal{H}_{B}$
is denoted by $\mathcal{H}_{A}\otimes\mathcal{H}_{B}$ or $\mathcal{H}_{AB}%
$.\ Given a multipartite density operator $\rho_{AB}\in\mathcal{D}%
(\mathcal{H}_{A}\otimes\mathcal{H}_{B})$, we unambiguously write $\rho
_{A}=\operatorname{Tr}_{B}\{\rho_{AB}\}$ for the reduced density operator on
system $A$. We use $\rho_{AB}$, $\sigma_{AB}$, $\tau_{AB}$, $\omega_{AB}$,
etc.~to denote general density operators in $\mathcal{D}(\mathcal{H}%
_{A}\otimes\mathcal{H}_{B})$, while $\psi_{AB}$, $\varphi_{AB}$, $\phi_{AB}$,
etc.~denote rank-one density operators (pure states) in $\mathcal{D}%
(\mathcal{H}_{A}\otimes\mathcal{H}_{B})$ (with it implicit, clear from the
context, and the above convention implying that $\psi_{A}$, $\varphi_{A}$,
$\phi_{A}$ may be mixed if $\psi_{AB}$, $\varphi_{AB}$, $\phi_{AB}$ are pure).
A purification $|\phi^{\rho}\rangle_{RA}\in\mathcal{H}_{R}\otimes
\mathcal{H}_{A}$ of a state $\rho_{A}\in\mathcal{D}(\mathcal{H}_{A})$ is such
that $\rho_{A}=\operatorname{Tr}_{R}\{|\phi^{\rho}\rangle\langle\phi^{\rho
}|_{RA}\}$. As is conventional, we often say that a unit vector $|\psi\rangle$
is a pure state or a pure-state vector (while also saying that $|\psi
\rangle\langle\psi|$ is a pure state). An extension of a state $\rho_{A}%
\in\mathcal{S}(  \mathcal{H}_{A})  $ is some state $\rho_{RA}%
\in\mathcal{S}(  \mathcal{H}_{R}\otimes\mathcal{H}_{A})  $ such
that $\operatorname{Tr}_{R}\{  \rho_{RA}\}  =\rho_{A}$. Often, an
identity operator is implicit if we do not write it explicitly (and it should
be clear from the context).

Let $\{|i\rangle_{A}\}$ denote the standard, orthonormal basis for a Hilbert
space $\mathcal{H}_{A}$, and let $\{|i\rangle_{B}\}$ be defined similarly for
$\mathcal{H}_{B}$. If these spaces are finite-dimensional and their dimensions
are equal ($\dim(\mathcal{H}_{A})=\dim(\mathcal{H}_{B})=K$), then we define
the maximally entangled state $|\Phi\rangle_{AB}\in\mathcal{H}_{A}%
\otimes\mathcal{H}_{B}$ as%
\begin{equation}
|\Phi\rangle_{AB}\equiv\frac{1}{\sqrt{K}}\sum_{i}|i\rangle_{A}\otimes
|i\rangle_{B}.
\end{equation}

\subsection{Trace distance and fidelity}

The trace distance between two quantum states $\rho,\sigma\in\mathcal{D}%
(\mathcal{H})$\ is equal to $\left\Vert \rho-\sigma\right\Vert _{1}$, where
$\left\Vert C\right\Vert _{1}\equiv\operatorname{Tr}\{\sqrt{C^{\dag}C}\}$ for
any operator $C$. It has a direct operational interpretation in terms of the
distinguishability of these states. That is, if $\rho$ or $\sigma$ are
prepared with equal probability and the task is to distinguish them via some
quantum measurement, then the optimal success probability in doing so is equal
to $\left(  1+\left\Vert \rho-\sigma\right\Vert _{1}/2\right)  /2$ \cite{H69}.

The fidelity is defined as $F(\rho,\sigma)\equiv\left\Vert \sqrt{\rho}%
\sqrt{\sigma}\right\Vert _{1}^{2}$ \cite{U76}. Uhlmann's theorem states that
\cite{U76}%
\begin{equation}
F(\rho_{A},\sigma_{A})=\max_{U}\left\vert \langle\phi^{\sigma}|_{RA}%
U_{R}\otimes I_{A}|\phi^{\rho}\rangle_{RA}\right\vert ^{2},
\label{eq:uhlmann-thm}%
\end{equation}
where $|\phi^{\rho}\rangle_{RA}$ and $|\phi^{\sigma}\rangle_{RA}$ are fixed
purifications of $\rho_{A}$ and $\sigma_{A}$, respectively, and the
optimization is with respect to all unitaries $U_{R}$. Uhlmann's theorem also
implies that, for a given extension of $\rho_{AB}$ of $\rho_{A}$, there exists
an extension $\sigma_{AB}$ of $\sigma_{A}$ such that%
\begin{equation}
F(\rho_{A},\sigma_{A})=F(\rho_{AB},\sigma_{AB}). \label{eq:uhlmann-extend}%
\end{equation}
See, e.g., \cite[Corollary~3.1]{T15} for an explicit proof of the above
equality. The following inequalities hold for trace distance and fidelity
\cite{FG98}:%
\begin{equation}
1-\sqrt{F(\rho,\sigma)}\leq\frac{1}{2}\left\Vert \rho-\sigma\right\Vert
_{1}\leq\sqrt{1-F(\rho,\sigma)}. \label{eq:fid-trace}%
\end{equation}

\subsection{Private states}

\label{sec:private-states}

Let $\gamma_{ABA^{\prime}B^{\prime}}\in\mathcal{D}(\mathcal{H}_{AA^{\prime
}BB^{\prime}})$\ be a state shared between spatially separated parties Alice
and Bob, such that $K\equiv\dim(\mathcal{H}_{A})=\dim(\mathcal{H}_{B}%
)<+\infty$, Alice possesses systems $A$ and $A^{\prime}$, and Bob possesses
systems $B$ and $B^{\prime}$. The state $\gamma_{ABA^{\prime}B^{\prime}}$ is called a
private state \cite{HHHO05,HHHO09} if Alice and Bob can extract a secret key
from it by performing local measurements on $A$ and $B$, which is product with
any purifying system of $\gamma_{ABA^{\prime}B^{\prime}}$. That is,
$\gamma_{ABA^{\prime}B^{\prime}}$ is a private state of $\log_{2}K$ private
bits if, for any purification $\left\vert \varphi^{\gamma}\right\rangle
_{ABA^{\prime}B^{\prime}E}$ of $\gamma_{ABA^{\prime}B^{\prime}}$, the
following holds:%
\begin{equation}
\left(  \mathcal{M}_{A}\otimes\mathcal{M}_{B}\otimes\text{Tr}_{A^{\prime
}B^{\prime}}\right)  \left(  \varphi_{ABA^{\prime}B^{\prime}E}^{\gamma
}\right)  =\frac{1}{K}\sum_{i}|i\rangle\langle i|_{A}\otimes|i\rangle\langle
i|_{B}\otimes\sigma_{E},
\end{equation}
where $\mathcal{M}(\cdot)=\sum_{i}|i\rangle\langle i|(\cdot)|i\rangle\langle
i|$ is a projective measurement channel and $\sigma_{E}$ is some state on the
purifying system $E$ (which could depend on the particular purification). The
systems $A^{\prime}$ and $B^{\prime}$ are known as \textquotedblleft shield
systems\textquotedblright\ because they aid in keeping the key secure from any
party possessing the purifying system (part or all of which might belong to a
malicious party). It is a non-trivial consequence of the above definition that
a private state of $\log_{2}K$ private bits can be written in the following
form \cite{HHHO05,HHHO09}:
\begin{equation}
\gamma_{ABA^{\prime}B^{\prime}}=U_{ABA^{\prime}B^{\prime}}\left(  \Phi
_{AB}\otimes\sigma_{A^{\prime}B^{\prime}}\right)  U_{ABA^{\prime}B^{\prime}%
}^{\dag},\label{eq:private-1}
\end{equation}
where $\Phi_{AB}$ is a maximally entangled state of Schmidt rank $K$%
\begin{equation}
\Phi_{AB}\equiv\frac{1}{K}\sum_{i,j}|i\rangle\langle j|_{A}\otimes
|i\rangle\langle j|_{B},\label{eq:max-ent-state}%
\end{equation}
and%
\begin{equation}
U_{ABA^{\prime}B^{\prime}}=\sum_{i,j}|i\rangle\langle i|_{A}\otimes
|j\rangle\langle j|_{B}\otimes U_{A^{\prime}B^{\prime}}^{ij}%
\end{equation}
is a controlled unitary known as a \textquotedblleft twisting
unitary,\textquotedblright\ with each $U_{A^{\prime}B^{\prime}}^{ij}$ a
unitary operator.\ Any extension $\gamma_{AA^{\prime}BB^{\prime}E}%
\in\mathcal{D}(\mathcal{H}_{AA^{\prime}BB^{\prime}E})$\ of a private state
$\gamma_{AA^{\prime}BB^{\prime}}$ necessarily has the following form:%
\begin{equation}
\gamma_{AA^{\prime}BB^{\prime}E}=U_{AA^{\prime}BB^{\prime}}\left(  \Phi
_{AB}\otimes\sigma_{A^{\prime}B^{\prime}E}\right)  U_{AA^{\prime}BB^{\prime}%
}^{\dag},\label{eq:ext-private-state}%
\end{equation}
where $\sigma_{A^{\prime}B^{\prime}E}$ is an extension of $\sigma_{A^{\prime
}B^{\prime}}$.

A multipartite private state is a straightforward generalization of the
bipartite definition \cite{HA06}. Indeed, $\gamma_{A_{1}\cdots A_{m}%
A_{1}^{\prime}\cdots A_{m}^{\prime}}$\ is a state of $\log_{2}K$ private bits
if, for any purification $\left\vert \varphi^{\gamma}\right\rangle
_{A_{1}\cdots A_{m}A_{1}^{\prime}\cdots A_{m}^{\prime}E}$ of $\gamma
_{A_{1}\cdots A_{m}A_{1}^{\prime}\cdots A_{m}^{\prime}}$, the following
holds:
\begin{equation}
\left(  \mathcal{M}_{A_{1}}\otimes\cdots\otimes\mathcal{M}_{A_{m}}%
\otimes\text{Tr}_{A_{1}^{\prime}\cdots A_{m}^{\prime}}\right)  \left(
\varphi_{A_{1}\cdots A_{m}A_{1}^{\prime}\cdots A_{m}^{\prime}E}^{\gamma
}\right)  =\frac{1}{K}\sum_{i}|i\rangle\langle i|_{A_{1}}\otimes\cdots
\otimes|i\rangle\langle i|_{A_{m}}\otimes\sigma_{E},\label{eq:mult-secret-key}%
\end{equation}
where $\mathcal{M}$ and $\sigma$ are as before, the key systems $A_{1}$,
\ldots, $A_{m}$ all have the same dimension equal to $K$, and the shield
systems $A_{1}^{\prime}$, \ldots, $A_{m}^{\prime}$ have arbitrary dimension.
The above implies that an $m$-partite private state of $\log_{2}K$ private
bits is a quantum state $\gamma_{A_{1}\cdots A_{m}A_{1}^{\prime}\cdots
A_{m}^{\prime}}$ that can be written as%
\begin{equation}
\gamma_{A_{1}\cdots A_{m}A_{1}^{\prime}\cdots A_{m}^{\prime}}=U_{A_{1}\cdots
A_{m}A_{1}^{\prime}\cdots A_{m}^{\prime}}(\Phi_{A_{1}\cdots A_{m}}%
\otimes\sigma_{A_{1}^{\prime}\cdots A_{m}^{\prime}})U_{A_{1}\cdots A_{m}%
A_{1}^{\prime}\cdots A_{m}^{\prime}}^{\dag},\label{eq:mult_private state}%
\end{equation}
where $\Phi_{A_{1}\cdots A_{m}}$ is an $m$-qudit maximally entangled (GHZ)
state%
\begin{equation}
\Phi_{A_{1}\cdots A_{m}}\equiv\frac{1}{K}\sum_{i,j}|i\rangle\langle j|_{A_{1}%
}\otimes\cdots\otimes|i\rangle\langle j|_{A_{m}}%
\end{equation}
and%
\begin{equation}
U_{A_{1}\cdots A_{m}A_{1}^{\prime}\cdots A_{m}^{\prime}}=\sum_{i_{1}%
,\ldots,i_{m}}|i_{1},\ldots,i_{m}\rangle\langle i_{1},\ldots,i_{m}%
|_{A_{1}\cdots A_{m}}\otimes U_{A_{1}^{\prime}\cdots A_{m}^{\prime}}%
^{i_{1},\ldots,i_{m}}%
\end{equation}
is a twisting unitary, where each unitary $U_{A_{1}^{\prime}\cdots
A_{m}^{\prime}}^{i_{1},\ldots,i_{m}}$ depends on the values $i_{1}%
,\ldots,i_{m}$. Any extension $\gamma_{A_{1}\cdots A_{m}A_{1}^{\prime}\cdots
A_{m}^{\prime}E}$\ of such a private state necessarily has the following form:%
\begin{equation}
\gamma_{A_{1}\cdots A_{m}A_{1}^{\prime}\cdots A_{m}^{\prime}E}=U_{A_{1}\cdots
A_{m}A_{1}^{\prime}\cdots A_{m}^{\prime}}(\Phi_{A_{1}\cdots A_{m}}%
\otimes\sigma_{A_{1}^{\prime}\cdots A_{m}^{\prime}E})U_{A_{1}\cdots A_{m}%
A_{1}^{\prime}\cdots A_{m}^{\prime}}^{\dag},
\end{equation}
where $\sigma_{A_{1}^{\prime}\cdots A_{m}^{\prime}E}$ is an extension of
$\sigma_{A_{1}^{\prime}\cdots A_{m}^{\prime}}$.

\subsection{Conditional quantum mutual and multipartite information}

For a quantum state $\rho_{ABE}$ shared between three parties (Alice, Bob, and
Eve), the conditional quantum mutual information is defined as%
\begin{equation}
I(A;B|E)_{\rho}\equiv H(AE)_{\rho}+H(BE)_{\rho}-H(E)_{\rho}-H(ABE)_{\rho},
\label{eq:CMI-definition}%
\end{equation}
where $H(F)_{\sigma}\equiv-\operatorname{Tr}\{\sigma_{F}\log_{2}\sigma_{F}\}$
is the quantum entropy of a state $\sigma_{F}$ on system $F$. The conditional
quantum entropy is defined as%
\begin{equation}
H(A|B)_{\rho}\equiv H(AB)_{\rho}-H(B)_{\rho},
\end{equation}
which allows us to write%
\begin{equation}
I(A;B|E)_{\rho}=H(A|E)_{\rho}-H(A|BE)_{\rho}. \label{eq:cmi-ce}%
\end{equation}
The conditional quantum mutual information is non-negative:%
\begin{equation}
I(A;B|E)_{\rho}\geq0, \label{eq:SSA}%
\end{equation}
which is an entropy inequality known as strong subadditivity
\cite{LR73,LR73PRL}. The following uniform bound for the continuity of
conditional quantum entropy was proven in \cite{Winter15}, by building on
\cite{AF04}:%
\begin{equation}
\left\vert H(A|B)_{\rho}-H(A|B)_{\omega}\right\vert \leq2\varepsilon\log
_{2}\dim(\mathcal{H}_{A})+(1+\varepsilon)h_{2}(\varepsilon/\left[
1+\varepsilon\right]  ), \label{eq:AFW-ineq}%
\end{equation}
for states $\rho_{AB},\omega_{AB}\in\mathcal{D}(\mathcal{H}_{AB})$ such that%
\begin{equation}
\frac{1}{2}\left\Vert \rho_{AB}-\omega_{AB}\right\Vert _{1}\leq\varepsilon
\in\left[  0,1\right]
\end{equation}
and where the binary entropy $h_{2}(\varepsilon)\equiv-\varepsilon\log_{2}\varepsilon
-(1-\varepsilon)\log_{2}(1-\varepsilon)$. The following uniform bound for
continuity of conditional quantum mutual information holds as well \cite{S15}:%
\begin{equation}
\left\vert I(A;B|E)_{\rho}-I(A;B|E)_{\omega}\right\vert \leq2\varepsilon
\log_{2}\min\left\{  \dim(\mathcal{H}_{A}),\dim(\mathcal{H}_{B})\right\}
+2(1+\varepsilon)h_{2}(\varepsilon/\left[  1+\varepsilon\right]  )).
\label{eq:CMI-cont}%
\end{equation}
for states $\rho_{ABE},\omega_{ABE}\in\mathcal{D}(\mathcal{H}_{ABE})$ such
that $\frac{1}{2}\left\Vert \rho_{ABE}-\omega_{ABE}\right\Vert _{1}%
\leq\varepsilon\in\left[  0,1\right]  $. Notice that this inequality is an
improvement over what one would obtain merely by combining \eqref{eq:cmi-ce}
and \eqref{eq:AFW-ineq}.

For an $m+1$-partite quantum state $\rho_{A_{1}\cdots A_{m}E}$, there are at
least two distinct ways to generalize the conditional mutual information:
\begin{align}
I(A_{1};\cdots;A_{m}|E)_{\rho} &  =\sum_{i=1}^{m}H(A_{i}|E)-H(A_{1}\cdots
A_{m}|E)_{\rho},\label{eq:cmmi_I}\\
\widetilde{I}(A_{1};\cdots;A_{m}|E)_{\rho} &  =\sum_{i=1}^{m}H(A_{\left[
m\right]  \backslash\left\{  i\right\}  }|E)_{\rho}-\left(  m-1\right)
H(A_{1}\cdots A_{m}|E)_{\rho}\label{eq:cmmi_S}\\
&  =H(A_{1}\cdots A_{m}|E)_{\rho}-\sum_{i=1}^{m}H(A_{i}|A_{\left[  m\right]
\backslash\left\{  i\right\}  }E)_{\rho},\label{eq:cmmi_S_2}%
\end{align}
where the shorthand $A_{\left[  m\right]  \backslash\left\{  i\right\}  }$
indicates all systems $A_{1}\cdots A_{m}$ except for system $A_{i}$. Both
quantities are non-negative, due to strong subadditivity. The former is the
conditional version of a quantity known as the total correlation
\cite{Watan60} and has been used in a variety of contexts
\cite{PHH08,YHW08,W14}, while the latter is a conditional version of the dual
total correlation \cite{H75,H78}, employed later on in
\cite{CMS02,YHHHOS09,YHW08}. The above two quantities are generally
incomparable, but related by the following formula \cite{YHHHOS09}:
\begin{equation}
I(A_{1};\cdots;A_{m}|E)_{\rho}+\widetilde{I}(A_{1};\cdots;A_{m}|E)_{\rho}%
=\sum_{i=1}^{m}I(A_{i};A_{\left[  m\right]  \backslash\left\{  i\right\}
}|E)_{\rho}.\label{eq:cmmi_dual}%
\end{equation}
For a state $\rho_{BA_{1}A_{2}\cdots A_{m}E}$, the above conditional
multipartite informations obey the following chain rules, respectively
\cite[Section III]{YHHHOS09}:
\begin{align}
I(BA_{1};\cdots;A_{m}|E)_{\rho} &  =I(A_{1};\cdots;A_{m}|BE)_{\rho}+\sum
_{i=2}^{m}I(B;A_{i}|E)_{\rho},\label{eq:I_Chain}\\
\widetilde{I}(BA_{1};A_{2}\cdots;A_{m}|E)_{\rho} &  =\widetilde{I}(A_{1}%
;A_{2};\cdots;A_{m}|BE)_{\rho}+I(B;A_{2}\cdots A_{m}|E)_{\rho}%
.\label{eq:S_Chain}%
\end{align}

\subsection{Squashed entanglements}

The squashed entanglement of a bipartite state $\rho_{AB}$ is defined as%
\begin{equation}
E_{\operatorname{sq}}(A;B)_{\rho}\equiv\frac{1}{2}\inf_{\omega_{ABE}}\left\{
I(A;B|E)_{\omega}:\rho_{AB}=\operatorname{Tr}_{E}\left\{  \omega
_{ABE}\right\}  \right\}  ,
\end{equation}
where the infimum is with respect to all extensions $\omega_{ABE}$ of the
state $\rho_{AB}$ \cite{CW04}. An interpretation of $E_{\operatorname{sq}%
}\left(  A;B\right)  _{\rho}$\ is that it quantifies the correlations present
between Alice and Bob after a third party (often associated to an environment
or eavesdropper) attempts to \textquotedblleft squash down\textquotedblright%
\ their correlations.

There are at least two different multipartite generalizations of the squashed
entanglement \cite{YHHHOS09,AHS08}. For an $m$-partite quantum state
$\rho_{A_{1}\cdots A_{m}}$, the squashed entanglement measures
$E_{\operatorname{sq}}$ and $\widetilde{E}_{\operatorname{sq}}$ are defined as%
\begin{align}
E_{\operatorname{sq}}(A_{1};\cdots;A_{m})_{\rho}  &  \equiv\frac{{1}}{2}%
\inf_{\omega_{A_{1}A_{2}\cdots A_{m}E}}\left\{  I(A_{1};\cdots;A_{m}%
|E)_{\omega}:\operatorname{Tr}_{E}\left\{  \omega_{A_{1}\cdots A_{m}%
E}\right\}  =\rho_{A_{1}\cdots A_{m}}\right\}  ,\label{eq:I_Squashed}\\
\widetilde{E}_{\operatorname{sq}}(A_{1};\cdots;A_{m})_{\rho}  &  \equiv
\frac{{1}}{2}\inf_{\omega_{A_{1}A_{2}\cdots A_{m}E}}\left\{  \widetilde
{I}(A_{1};\cdots;A_{m}|E)_{\omega}:\operatorname{Tr}_{E}\left\{  \omega
_{A_{1}\cdots A_{m}E}\right\}  =\rho_{A_{1}\cdots A_{m}}\right\}  ,
\label{eq:S_Squashed}%
\end{align}
where the infima are taken with respect to all possible extensions
$\omega_{A_{1}\cdots A_{m}E}$ of $\rho_{A_{1}\cdots A_{m}}$, and $I$ and
$\widetilde{I}$ are the conditional quantum multipartite information
quantities given in \eqref{eq:cmmi_I} and \eqref{eq:cmmi_S}, respectively.

\section{Bipartite squashed entanglement and approximate private states}

This section establishes one of this paper's main results
(Theorem~\ref{thm:bipartite-bound}), which is an upper bound on the logarithm
of the dimension of a key system of an $\varepsilon$-approximate private state
in terms of its squashed entanglement, plus another term depending only on
$\varepsilon$ and $\log_{2}K$. We start with the following lemma, which
applies to any extension of a bipartite private state:

\begin{lemma}
\label{lem:log-K-to-info-measures}Let $\gamma_{AA^{\prime}BB^{\prime}}$ be a
bipartite private state and let $\gamma_{AA^{\prime}BB^{\prime}E}$ be an
extension of it, as defined in Section~\ref{sec:private-states}. Then the
following identity holds for any such extension:%
\begin{equation}
2\log_{2}K=I(A;BB^{\prime}|E)_{\gamma}+I(A^{\prime};B|AB^{\prime}E)_{\gamma}.
\label{eq:logK-to-info-measures}%
\end{equation}

\end{lemma}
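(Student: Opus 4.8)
The plan is to exploit the explicit structure of an extension of a private state given in~\eqref{eq:ext-private-state}, namely $\gamma_{AA^{\prime}BB^{\prime}E}=U(\Phi_{AB}\otimes\sigma_{A^{\prime}B^{\prime}E})U^{\dag}$ with $U$ the twisting unitary. The key observation is that twisting unitaries are controlled on the key systems $A$ and $B$, so they act as the identity on those systems and can be ``undone'' by local unitaries on $A^{\prime}$ and $B^{\prime}$ once we condition on $A$ and $B$. Since the conditional mutual information is invariant under local unitaries on any of its arguments, I would first argue that the terms on the right-hand side of~\eqref{eq:logK-to-info-measures} are unchanged if we replace $\gamma$ by the untwisted state $\Phi_{AB}\otimes\sigma_{A^{\prime}B^{\prime}E}$; this requires being slightly careful because in $I(A;BB^{\prime}|E)_{\gamma}$ the system $B^{\prime}$ is one of the ``correlated'' arguments while $A,B$ are not both available to act as controls, so one should instead verify the identity directly on $\gamma$ using the chain rule.

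Concretely, I would proceed as follows. Apply the chain rule for conditional mutual information to expand $I(AA^{\prime};BB^{\prime}|E)_{\gamma}$ in two ways:
\begin{align}
I(AA^{\prime};BB^{\prime}|E)_{\gamma} &= I(A;BB^{\prime}|E)_{\gamma}+I(A^{\prime};BB^{\prime}|AE)_{\gamma},\\
I(AA^{\prime};BB^{\prime}|E)_{\gamma} &= I(A^{\prime};B|AB^{\prime}E)_{\gamma}+\big[\text{remaining terms}\big].
\end{align}
The cleaner route, though, is probably to write $2\log_2 K = I(A;B)_{\Phi} = H(A)_\Phi + H(B)_\Phi$ (since $\Phi_{AB}$ is maximally entangled of rank $K$ and $A,B$ are in a product state with the rest), and then transport this to $\gamma$. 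Because $U$ is controlled on $A$ and $B$, the reduced states $\gamma_A$ and $\gamma_B$ are each maximally mixed on a $K$-dimensional space, so $H(A)_\gamma = H(B)_\gamma = \log_2 K$. I would then express $2\log_2 K = H(A)_\gamma + H(B)_\gamma$ and massage the right-hand side of~\eqref{eq:logK-to-info-measures}, written out via~\eqref{eq:CMI-definition}, into exactly $H(A)_\gamma + H(B)_\gamma$ by using (i) the controlled-unitary structure, which implies relations such as $H(AB|E)_\gamma = H(A'B'|E)_{\text{untwisted}} + $ (classical key contribution) and the fact that conditioned on the key values the shields decouple into $\sigma_{A'B'E}$, and (ii) the purity/product structure of $\Phi_{AB}$ relative to $\sigma_{A'B'E}$.

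I expect the main obstacle to be bookkeeping the entropies correctly under the twisting unitary: one must track how $U$ redistributes correlations between the key systems $AB$ and the shield systems $A'B'$ without accidentally assuming $U$ is trivial. The safe way is to note that $U$ commutes with the dephasing channels $\mathcal{M}_A\otimes\mathcal{M}_B$ and acts as a genuine unitary only on $A'B'$ once $A,B$ are dephased; so after expanding everything in terms of the maximally-correlated classical register $i$ and the states $U^{ij}\sigma_{A'B'E}(U^{ij})^\dagger$, each individual entropy term on the right of~\eqref{eq:logK-to-info-measures} becomes a sum over a uniform classical index plus a term that is independent of $i,j$ by unitary invariance, and the classical-index contributions collect to give precisely $2\log_2 K$. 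Once this cancellation is set up, the identity follows by direct substitution; no inequalities are needed, which is consistent with the statement being an exact identity.
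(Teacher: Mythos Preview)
Your proposal is workable but unfocused, and it misses the single organizing identity that makes the paper's proof clean. The paper argues in two steps: (i) the generic chain rule
\[
I(AA';BB'|E)_{\gamma}=I(A;BB'|E)_{\gamma}+I(A';B'|AE)_{\gamma}+I(A';B|AB'E)_{\gamma},
\]
and (ii) the private-state-specific identity (essentially Christandl's thesis)
\[
I(AA';BB'|E)_{\gamma}=2\log_{2}K+I(A';B'|AE)_{\gamma},
\]
which together immediately give~\eqref{eq:logK-to-info-measures}. Step~(ii) is proved by computing only the four entropies $H(AA'E)$, $H(BB'E)$, $H(E)$, $H(AA'BB'E)$, each of which simplifies nicely because the relevant marginals of $\gamma$ are classical--quantum in the key register after tracing out one of $A,B$. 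You wrote down the first chain-rule step but then abandoned it for a ``cleaner route'' that is in fact messier.

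Your direct approach---expanding $I(A;BB'|E)_{\gamma}+I(A';B|AB'E)_{\gamma}$ into entropies and reducing to $H(A)_{\gamma}+H(B)_{\gamma}$---does succeed, but it forces you to compute additional marginal entropies such as $H(AB'E)_{\gamma}$ and $H(AA'B'E)_{\gamma}$. The former depends on the twisting through the states $\operatorname{Tr}_{A'}[U^{ii}\sigma_{A'B'E}(U^{ii})^{\dag}]$, which vary with $i$; these $i$-dependent terms only cancel against the matching contribution from $H(BB'E)_{\gamma}$ at the very end. So the bookkeeping you flag as the ``main obstacle'' is real, and your sketch does not actually carry it out. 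By contrast, the paper's route never encounters these awkward marginals because the Christandl identity packages all the private-state structure into one line, after which the remaining manipulation is purely formal chain-rule algebra valid for any state.
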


\begin{proof}
First consider that the following identity holds as a consequence of two
applications of the chain rule for conditional quantum mutual information:%
\begin{align}
I(AA^{\prime};BB^{\prime}|E)_{\gamma}  &  =I(A;BB^{\prime}|E)_{\gamma
}+I(A^{\prime};BB^{\prime}|AE)_{\gamma}\nonumber\\
&  =I(A;BB^{\prime}|E)_{\gamma}+I(A^{\prime};B^{\prime}|AE)_{\gamma
}+I(A^{\prime};B|B^{\prime}AE)_{\gamma}. \label{eq:chain-rule-CMI}%
\end{align}
Combined with the following identity, which holds for an extension
$\gamma_{AA^{\prime}BB^{\prime}E}$ of a private state $\gamma_{AA^{\prime
}BB^{\prime}}$,%
\begin{equation}
I(AA^{\prime};BB^{\prime}|E)_{\gamma}=2\log_{2}K+I(A^{\prime};B^{\prime
}|AE)_{\gamma}, \label{eq:christandl-thesis}%
\end{equation}
we recover the statement in \eqref{eq:logK-to-info-measures}. So it remains to
prove \eqref{eq:christandl-thesis}. This identity is a very slight rewriting
of the last line in the proof of \cite[Proposition~4.19]{Chr06}, and we recall
the proof here. By definition, we have that%
\begin{equation}
I(AA^{\prime};BB^{\prime}|E)_{\gamma}=H(AA^{\prime}E)_{\gamma}+H(BB^{\prime
}E)_{\gamma}-H(E)_{\gamma}-H(AA^{\prime}BB^{\prime}E)_{\gamma}.
\end{equation}
By applying \eqref{eq:max-ent-state}--\eqref{eq:ext-private-state}, we can write $\gamma_{AA^{\prime
}BB^{\prime}E}$ as follows:%
\begin{equation}
\gamma_{AA^{\prime
}BB^{\prime}E} = \frac{1}{K}\sum_{i,j}|i\rangle\langle j|_{A}\otimes|i\rangle\langle
j|_{B}\otimes U_{A^{\prime}B^{\prime}}^{ii}\sigma_{A^{\prime}B^{\prime}%
E}(U_{A^{\prime}B^{\prime}}^{jj})^{\dag}.
\end{equation}
Tracing over system $B$ leads to the following state:%
\begin{equation}
\gamma_{AA^{\prime}B^{\prime}E}=\frac{1}{K}\sum_{i}|i\rangle\langle
i|_{A}\otimes\gamma_{A^{\prime}B^{\prime}E}^{i}, \label{eq:trace-out-B}%
\end{equation}
where%
\begin{equation}
\gamma_{A^{\prime}B^{\prime}E}^{i}\equiv U_{A^{\prime}B^{\prime}}^{ii}%
\sigma_{A^{\prime}B^{\prime}E}(U_{A^{\prime}B^{\prime}}^{ii})^{\dag}.
\end{equation}
Similarly, tracing over system $A$ of $\gamma_{AA^{\prime
}BB^{\prime}E}$ leads to%
\begin{equation}
\gamma_{BA^{\prime}B^{\prime}E}=\frac{1}{K}\sum_{i}|i\rangle\langle
i|_{B}\otimes\gamma_{A^{\prime}B^{\prime}E}^{i}. \label{eq:trace-out-A}%
\end{equation}
So these and the chain rule for conditional entropy imply that%
\begin{equation}
H(AA^{\prime}E)_{\gamma}=H(A)_{\gamma}+H(A^{\prime}E|A)_{\gamma}=\log_{2}
K+H(A^{\prime}E|A)_{\gamma}.
\end{equation}
Similarly, we have that%
\begin{equation}
H(BB^{\prime}E)_{\gamma}=\log_{2} K+H(B^{\prime}E|B)_{\gamma}=\log_{2}
K+H(B^{\prime}E|A)_{\gamma},
\end{equation}
where we have used the symmetries in
\eqref{eq:trace-out-B}--\eqref{eq:trace-out-A}. Since $\gamma_{E}=\gamma
_{E}^{i}$ for all $i$, we find that%
\begin{equation}
H(E)_{\gamma}=\frac{1}{K}\sum_{i}H(E)_{\gamma^{i}}=H(E|A)_{\gamma}.
\end{equation}
Finally, we have that%
\begin{align}
H(AA^{\prime}BB^{\prime}E)_{\gamma}  &  =H(ABA^{\prime}B^{\prime}%
E)_{\Phi\otimes\sigma}=H(AB)_{\Phi}+H(A^{\prime}B^{\prime}E)_{\sigma}\\
&  =\frac{1}{K}\sum_{i}H(A^{\prime}B^{\prime}E)_{\gamma^{i}}=H(A^{\prime
}B^{\prime}E|A)_{\gamma}.
\end{align}
Combining the above, we recover \eqref{eq:christandl-thesis}.
\end{proof}

\bigskip

We can now establish one of the main results of the paper:

\begin{theorem}
\label{thm:bipartite-bound}Let $\gamma_{AA^{\prime}BB^{\prime}}$ be a private
state and let $\omega_{AA^{\prime}BB^{\prime}}$ be an $\varepsilon
$-approximate private state, in the sense that%
\begin{equation}
F(\gamma_{AA^{\prime}BB^{\prime}},\omega_{AA^{\prime}BB^{\prime}}%
)\geq1-\varepsilon
\end{equation}
for $\varepsilon\in\left[  0,1\right]  $. Then%
\begin{equation}
\log_{2}K\leq E_{\operatorname{sq}}(AA^{\prime};BB^{\prime})_{\omega}%
+f_{1}(\sqrt{\varepsilon},K),
\end{equation}
where%
\begin{equation}
f_{1}(\varepsilon,K)\equiv2\varepsilon\log_{2}K+2(1+\varepsilon)h_{2}%
(\varepsilon/\left[  1+\varepsilon\right]  ).
\end{equation}

\end{theorem}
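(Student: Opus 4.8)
The plan is to combine Lemma~\ref{lem:log-K-to-info-measures} with the continuity bound \eqref{eq:CMI-cont} for conditional quantum mutual information, applied to a near-optimal extension of $\omega_{AA^\prime BB^\prime}$. First I would take an arbitrary extension $\omega_{AA^\prime BB^\prime E}$ of $\omega_{AA^\prime BB^\prime}$; by Uhlmann's theorem in the form \eqref{eq:uhlmann-extend}, together with the relation \eqref{eq:fid-trace} between fidelity and trace distance, there exists an extension $\gamma_{AA^\prime BB^\prime E}$ of the private state $\gamma_{AA^\prime BB^\prime}$ with $\tfrac12 \|\gamma_{AA^\prime BB^\prime E} - \omega_{AA^\prime BB^\prime E}\|_1 \leq \sqrt{\varepsilon}$. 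This coupling of extensions is the crucial move: it lets me transfer the exact identity for $\gamma$ over to $\omega$ at the cost of a continuity penalty.

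Next I would invoke Lemma~\ref{lem:log-K-to-info-measures} to write $2\log_2 K = I(A;BB^\prime|E)_\gamma + I(A^\prime;B|AB^\prime E)_\gamma$. I then estimate each of the two terms on the right by its $\omega$-counterpart using \eqref{eq:CMI-cont}. For the first term, $I(A;BB^\prime|E)$, the continuity bound contributes $2\sqrt{\varepsilon}\log_2 \dim(\mathcal H_A) + 2(1+\sqrt\varepsilon)h_2(\sqrt\varepsilon/[1+\sqrt\varepsilon])$, and since $\dim(\mathcal H_A) = K$ (and we may take the min over $A$ and $BB^\prime$ in \eqref{eq:CMI-cont}, which is at most $\dim(\mathcal H_A)$), this is exactly $f_1(\sqrt\varepsilon, K)$. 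For the second term, $I(A^\prime;B|AB^\prime E)$, the key observation is that we take the minimum of $\dim(\mathcal H_{A^\prime})$ and $\dim(\mathcal H_B)$ in \eqref{eq:CMI-cont}, and $\dim(\mathcal H_B) = K$; hence this continuity penalty is \emph{also} at most $f_1(\sqrt\varepsilon, K)$, with \emph{no} dependence on the shield dimension $\dim(\mathcal H_{A^\prime})$. This is precisely the point of the whole argument, and it is the reason Lemma~\ref{lem:log-K-to-info-measures} is phrased with $B$ (not $A^\prime$) in the ``small'' slot of the second mutual information.

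Putting these together yields $2\log_2 K \leq I(A;BB^\prime|E)_\omega + I(A^\prime;B|AB^\prime E)_\omega + 2 f_1(\sqrt\varepsilon, K)$. Now I bound both $\omega$-terms: by the chain rule and nonnegativity of conditional mutual information (strong subadditivity, \eqref{eq:SSA}), $I(A;BB^\prime|E)_\omega + I(A^\prime;B|AB^\prime E)_\omega \leq I(A;BB^\prime|E)_\omega + I(A^\prime;BB^\prime|AE)_\omega = I(AA^\prime;BB^\prime|E)_\omega$, mirroring \eqref{eq:chain-rule-CMI}. Therefore $2\log_2 K \leq I(AA^\prime;BB^\prime|E)_\omega + 2f_1(\sqrt\varepsilon, K)$. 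Dividing by $2$ gives $\log_2 K \leq \tfrac12 I(AA^\prime;BB^\prime|E)_\omega + f_1(\sqrt\varepsilon, K)$, and since the extension $\omega_{AA^\prime BB^\prime E}$ was arbitrary, taking the infimum over all such extensions yields $\log_2 K \leq E_{\operatorname{sq}}(AA^\prime;BB^\prime)_\omega + f_1(\sqrt\varepsilon, K)$, as claimed.

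The main obstacle, conceptually, is making sure the continuity penalty for the second mutual information term is controlled by $K$ rather than by the shield dimension; this is handled by the $\min$ in \eqref{eq:CMI-cont} and the specific grouping in Lemma~\ref{lem:log-K-to-info-measures}, so once the lemma is in hand the rest is a routine assembly. A minor technical point to be careful about is the $\sqrt{\varepsilon}$ versus $\varepsilon$ bookkeeping arising from \eqref{eq:fid-trace}, and the fact that \eqref{eq:CMI-cont} requires the trace-distance bound to lie in $[0,1]$, which holds since $\varepsilon \in [0,1]$ implies $\sqrt\varepsilon \in [0,1]$.
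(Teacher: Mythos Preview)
Your proposal is correct and follows essentially the same route as the paper's proof: Uhlmann's theorem to couple extensions, Lemma~\ref{lem:log-K-to-info-measures} for the identity, two applications of \eqref{eq:CMI-cont} with the key-system dimension $K$ appearing in the $\min$, and then strong subadditivity plus the chain rule to recombine into $I(AA';BB'|E)_\omega$. The only cosmetic difference is that the paper explicitly adds the nonnegative term $I(A';B'|AE)_\omega$ before invoking the three-term chain rule \eqref{eq:chain-rule-CMI}, whereas you phrase the same step as the inequality $I(A';B|AB'E)_\omega \leq I(A';BB'|AE)_\omega$ followed by the two-term chain rule; these are equivalent.
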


\begin{proof}
By \eqref{eq:uhlmann-extend} and \eqref{eq:fid-trace}, for a given extension
$\omega_{AA^{\prime}BB^{\prime}E}$ of $\omega_{AA^{\prime}BB^{\prime}}$, there
exists an extension $\gamma_{AA^{\prime}BB^{\prime}E}$ of $\gamma_{AA^{\prime
}BB^{\prime}}$ such that%
\begin{equation}
\frac{1}{2}\left\Vert \gamma_{AA^{\prime}BB^{\prime}E}-\omega_{AA^{\prime
}BB^{\prime}E}\right\Vert _{1}\leq\sqrt{\varepsilon}.
\end{equation}
We then find that%
\begin{align}
2\log_{2}K  &  =I(A;BB^{\prime}|E)_{\gamma}+I(A^{\prime};B|AB^{\prime
}E)_{\gamma}\\
&  \leq I(A;BB^{\prime}|E)_{\omega}+I(A^{\prime};B|AB^{\prime}E)_{\omega
}+2f_{1}(\sqrt{\varepsilon},K)\\
&  \leq I(A;BB^{\prime}|E)_{\omega}+I(A^{\prime};B|AB^{\prime}E)_{\omega
}+I(A^{\prime};B^{\prime}|AE)_{\omega}+2f_{1}(\sqrt{\varepsilon},K)\\
&  =I(AA^{\prime};BB^{\prime}|E)_{\omega}+2f_{1}(\sqrt{\varepsilon},K).
\end{align}
The first equality follows from Lemma~\ref{lem:log-K-to-info-measures}. The
first inequality follows from two applications of \eqref{eq:CMI-cont}. The
second inequality follows because $I(A^{\prime};B^{\prime}|AE)_{\omega}\geq0$
(this is strong subadditivity, recalled in \eqref{eq:SSA}). The last equality
is a consequence of the chain rule for conditional mutual information, as used
in \eqref{eq:chain-rule-CMI}. Since the inequality%
\begin{equation}
2\log_{2} K\leq I(AA^{\prime};BB^{\prime}|E)_{\omega}+2f_{1}(\sqrt
{\varepsilon},K)
\end{equation}
holds for any extension of $\omega$, the statement of the theorem follows.
\end{proof}

\bigskip For completeness, we now provide an arguably simpler proof that
squashed entanglement is an upper bound on distillable key. Before doing so,
let us recall the definition of distillable key of a bipartite state
$\rho_{AB}$. An $(n,P,\varepsilon)$ key distillation protocol for $\rho_{AB}$
consists of an LOCC\ channel $\mathcal{L}_{A^{n}B^{n}\rightarrow\hat{A}\hat
{B}A^{\prime}B^{\prime}}$ such that%
\begin{equation}
F(\omega_{\hat{A}\hat{B}A^{\prime}B^{\prime}},\gamma_{\hat{A}\hat{B}A^{\prime
}B^{\prime}})\geq1-\varepsilon\in\left[  0,1\right]  ,
\end{equation}
where%
\begin{equation}
\omega_{\hat{A}\hat{B}A^{\prime}B^{\prime}}\equiv\mathcal{L}_{A^{n}%
B^{n}\rightarrow\hat{A}\hat{B}A^{\prime}B^{\prime}}(\rho_{AB}^{\otimes n}),
\end{equation}
$\gamma_{\hat{A}\hat{B}A^{\prime}B^{\prime}}$ is a private state, and $\left[
\log_{2}\dim(\mathcal{H}_{\hat{A}})\right]  /n=\left[  \log_{2}\dim
(\mathcal{H}_{\hat{B}})\right]  /n\geq P$. A distillable key rate~$P$ is
achievable for $\rho_{AB}$ if for all $\varepsilon\in(0,1)$, $\delta>0$, and
sufficiently large $n$, there exists an $(n,P-\delta,\varepsilon)$ key
distillation protocol for $\rho_{AB}$. The distillable key $P(\rho_{AB})$ is
defined to be the supremum of all distillable key rates. We can then establish
a slightly simpler proof of the following theorem from
\cite{Chr06,CEHHOR07,Christandl2012}, by employing
Theorem~\ref{thm:bipartite-bound} in the first step of the proof:

\begin{theorem}
[\cite{Chr06,CEHHOR07,Christandl2012}]\label{thm:distillable-key-state}The
distillable key $P(\rho_{AB})$ of a bipartite state $\rho_{AB}$ is bounded
from above by its squashed entanglement:%
\begin{equation}
P(\rho_{AB})\leq E_{\operatorname{sq}}(A;B)_{\rho}.
\end{equation}

\end{theorem}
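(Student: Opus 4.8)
The plan is to combine Theorem~\ref{thm:bipartite-bound} with the standard ``monotonicity under LOCC plus additivity plus dimension bookkeeping'' argument for squashed entanglement. The point of invoking Theorem~\ref{thm:bipartite-bound} at the very start is precisely to bypass the simulation argument of \cite{CEHHOR07,Christandl2012}: since the extra term $f_{1}(\sqrt{\varepsilon},K)$ there depends only on $\varepsilon$ and $\log_{2}K$ and \emph{not} on the shield dimension, we never need to control how large $A^{\prime}B^{\prime}$ grows.

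First I would fix an achievable rate $P$ and, for $\varepsilon\in(0,1)$ and $\delta>0$, take for each sufficiently large $n$ an $(n,P-\delta,\varepsilon)$ key distillation protocol, i.e.\ an LOCC channel $\mathcal{L}_{A^{n}B^{n}\rightarrow\hat{A}\hat{B}A^{\prime}B^{\prime}}$ producing $\omega_{\hat{A}\hat{B}A^{\prime}B^{\prime}}$ with $F(\omega_{\hat{A}\hat{B}A^{\prime}B^{\prime}},\gamma_{\hat{A}\hat{B}A^{\prime}B^{\prime}})\geq 1-\varepsilon$ for some private state $\gamma$ with key dimension $K$ satisfying $\log_{2}K\geq n(P-\delta)$. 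Applying Theorem~\ref{thm:bipartite-bound} to the grouping $A\!=\!\hat A$, $A'\!=\!A'$, $B\!=\!\hat B$, $B'\!=\!B'$ gives
\begin{equation}
\log_{2}K\leq E_{\operatorname{sq}}(\hat{A}A^{\prime};\hat{B}B^{\prime})_{\omega}+f_{1}(\sqrt{\varepsilon},K).
\end{equation}
Next I would use the three basic properties of bipartite squashed entanglement established in \cite{CW04}: (i) monotonicity under LOCC, which since $\omega_{\hat{A}\hat{B}A^{\prime}B^{\prime}}=\mathcal{L}_{A^{n}B^{n}\rightarrow\hat{A}\hat{B}A^{\prime}B^{\prime}}(\rho_{AB}^{\otimes n})$ gives $E_{\operatorname{sq}}(\hat{A}A^{\prime};\hat{B}B^{\prime})_{\omega}\leq E_{\operatorname{sq}}(A^{n};B^{n})_{\rho^{\otimes n}}$; and (ii) additivity, which gives $E_{\operatorname{sq}}(A^{n};B^{n})_{\rho^{\otimes n}}=n\,E_{\operatorname{sq}}(A;B)_{\rho}$. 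Chaining these, dividing by $n$, and using $\log_{2}K\geq n(P-\delta)$ yields
\begin{equation}
P-\delta\leq E_{\operatorname{sq}}(A;B)_{\rho}+\frac{f_{1}(\sqrt{\varepsilon},K)}{n}.
\end{equation}

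Finally I would dispose of the error term. Since $f_{1}(\sqrt{\varepsilon},K)=2\sqrt{\varepsilon}\log_{2}K+2(1+\sqrt{\varepsilon})h_{2}(\sqrt{\varepsilon}/[1+\sqrt{\varepsilon}])$ and $\log_{2}K\leq c\,n$ for some constant $c$ once the rate is (say) at most the maximum possible — more carefully, in an $(n,P-\delta,\varepsilon)$ protocol one has $\log_{2}K = \log_{2}\dim(\mathcal{H}_{\hat A})$, which is finite and, taking the supremum definition, can be assumed $O(n)$ — the quantity $f_{1}(\sqrt{\varepsilon},K)/n$ is bounded by $2\sqrt{\varepsilon}\,c+o(1)$ as $n\to\infty$. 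Taking $n\to\infty$ gives $P-\delta\leq E_{\operatorname{sq}}(A;B)_{\rho}+2c\sqrt{\varepsilon}$; then letting $\delta\to 0$ and $\varepsilon\to 0$ gives $P\leq E_{\operatorname{sq}}(A;B)_{\rho}$, and taking the supremum over achievable $P$ completes the proof. The main obstacle — really the only subtlety — is the handling of the $\sqrt{\varepsilon}\log_{2}K$ term: one must argue that $\log_{2}K$ grows at most linearly in $n$ (so that dividing by $n$ tames it) and that the $\varepsilon\to 0$ limit can be taken after the $n\to\infty$ limit. This is exactly the place where the old proofs needed the shield dimension to be controlled; here it is automatic because Theorem~\ref{thm:bipartite-bound}'s correction term already involves only $\log_{2}K$, which is the distilled key length and hence inherently $O(n)$.
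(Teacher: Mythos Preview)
Your overall line of argument matches the paper's exactly: apply Theorem~\ref{thm:bipartite-bound}, then LOCC monotonicity, then additivity. The only real difference is in how the residual term $f_{1}(\sqrt{\varepsilon},K)$ is discharged, and here your proposal has a gap.

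After the chain of inequalities you have
\[
\log_{2}K \;\leq\; n\,E_{\operatorname{sq}}(A;B)_{\rho} \;+\; 2\sqrt{\varepsilon}\,\log_{2}K \;+\; 2(1+\sqrt{\varepsilon})\,h_{2}\!\bigl(\sqrt{\varepsilon}/[1+\sqrt{\varepsilon}]\bigr).
\]
The paper simply moves the $2\sqrt{\varepsilon}\log_{2}K$ term to the left, obtaining for $1-2\sqrt{\varepsilon}>0$
\[
P \;\leq\; \frac{\log_{2}K}{n} \;\leq\; \frac{1}{1-2\sqrt{\varepsilon}}\,E_{\operatorname{sq}}(A;B)_{\rho} \;+\; \frac{2(1+\sqrt{\varepsilon})}{n(1-2\sqrt{\varepsilon})}\,h_{2}\!\bigl(\sqrt{\varepsilon}/[1+\sqrt{\varepsilon}]\bigr),
\]
and then lets $n\to\infty$ followed by $\varepsilon\to 0$. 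No assumption whatsoever on the size of $K$ is needed.

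You instead divide by $n$ first and then try to bound $(\log_{2}K)/n$ by a constant $c$, asserting that $\log_{2}K$ is ``inherently $O(n)$''. That is not justified by the definition you are working with: an $(n,P-\delta,\varepsilon)$ protocol only requires $(\log_{2}K)/n \geq P-\delta$, and nothing in the setup gives an upper bound on $\dim(\mathcal{H}_{\hat A})$ (LOCC maps can enlarge systems arbitrarily). Your hand-wave here is precisely the kind of extraneous dimension control the paper's Theorem~\ref{thm:bipartite-bound} was designed to eliminate; you have just shifted it from the shield system to the key system. The fix is the one-line algebraic rearrangement above.
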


\begin{proof}
Consider an arbitrary $(n,P,\varepsilon)$ key distillation protocol for
$\rho_{AB}$. We then have that%
\begin{align}
\log_{2}\dim(\mathcal{H}_{\hat{A}})  &  \leq E_{\operatorname{sq}}(\hat
{A}A^{\prime};\hat{B}B^{\prime})_{\omega}+f_{1}(\sqrt{\varepsilon},\log
_{2}\dim(\mathcal{H}_{\hat{A}}))\\
&  \leq E_{\operatorname{sq}}(A^{n};B^{n})_{\rho^{\otimes n}}+f_{1}%
(\sqrt{\varepsilon},\log_{2}\dim(\mathcal{H}_{\hat{A}}))\\
&  =nE_{\operatorname{sq}}(A;B)_{\rho}+f_{1}(\sqrt{\varepsilon},\log_{2}%
\dim(\mathcal{H}_{\hat{A}})).
\end{align}
The inequalities follow respectively from Theorem~\ref{thm:bipartite-bound},
LOCC monotonicity of squashed entanglement \cite{CW04}, and additivity of
squashed entanglement with respect to tensor-product states \cite{CW04}. We
can then write the above explicitly as%
\begin{equation}
P\leq\frac{1}{n}\log_{2}\dim(\mathcal{H}_{\hat{A}})\leq\frac{1}{1-2\sqrt
{\varepsilon} }E_{\operatorname{sq}}(A;B)_{\rho}+\frac{2(1+\sqrt{\varepsilon
})}{n(1-2\sqrt{\varepsilon} )}h_{2}(\sqrt{\varepsilon}/\left[  1+\sqrt
{\varepsilon}\right]  ),
\end{equation}
whenever $1-2\sqrt{\varepsilon}>0$.
Taking the limit as $n\rightarrow\infty$ and then as $\varepsilon\rightarrow0$
establishes the result.
\end{proof}

\begin{remark}
\label{rem:distillable-key-channel}An $(n,P,\varepsilon)$ key distillation
protocol which employs a quantum channel $\mathcal{N}$ is defined similarly,
except one allows for $n$ uses of a quantum channel, with each use interleaved
by a round of LOCC (see \cite{TGW14IEEE} for a precise definition). One
defines achievable rates similarly as above, and $P_{2}(\mathcal{N})$ is the
LOCC-assisted private capacity of a quantum channel $\mathcal{N}$, equal to
the supremum of all achievable rates. A similar argument as in the proof of
Theorem~\ref{thm:distillable-key-state}, along with a particular subadditivity
lemma for squashed entanglement from \cite{TGW14IEEE}, can be used to
establish the following bound for an $(n,P,\varepsilon)$ key distillation
protocol which employs a quantum channel $\mathcal{N}$:
\begin{equation}
P\leq\frac{1}{1-2\sqrt{\varepsilon}}E_{\operatorname{sq}}(\mathcal{N}%
)+\frac{2(1+\sqrt{\varepsilon})}{n(1-2\sqrt{\varepsilon})}h_{2}(\sqrt
{\varepsilon}/\left[  1+\sqrt{\varepsilon}\right]  ),\label{eq:channel-bound}%
\end{equation}
whenever $1-2\sqrt{\varepsilon}>0$.
In the above, $E_{\operatorname{sq}}(\mathcal{N})$ is the squashed
entanglement of a quantum channel $\mathcal{N}_{A^{\prime}\rightarrow B}$,
defined in \cite{TGW14IEEE} as%
\begin{align}
E_{\operatorname{sq}}(\mathcal{N})  & \equiv\max_{\psi_{AA^{\prime}}%
}E_{\operatorname{sq}}(A;B)_{\omega},\\
\omega_{AB}  & \equiv\mathcal{N}_{A^{\prime}\rightarrow B}(\psi_{AA^{\prime}%
}),
\end{align}
where the optimization is with respect to all pure states $\psi_{AA^{\prime}}$
with $\dim(\mathcal{H}_{A})=\dim(\mathcal{H}_{A^{\prime}})$.\ The inequality
in \eqref{eq:channel-bound} implies that $P_{2}(\mathcal{N})\leq
E_{\operatorname{sq}}(\mathcal{N})$. See \cite{TGW14IEEE} for further details.
\end{remark}

\section{Multipartite squashed entanglements and approximate private states}

We can handle the multipartite squashed entanglements in a similar way. The
proof strategies are similar, with the main idea being to find particular
representations for the following quantities:%
\begin{align}
& I(A_{1}A_{1}^{\prime};\cdots;A_{m}A_{m}^{\prime}|E)_{\gamma}-I(A_{1}%
^{\prime};\cdots;A_{m}^{\prime}|EA_{1})_{\gamma},\\
& \widetilde{I}(A_{1}A_{1}^{\prime};\cdots;A_{m}A_{m}^{\prime}|E)_{\gamma
}-\widetilde{I}(A_{1}^{\prime};\cdots;A_{m}^{\prime}|EA_{1})_{\gamma},
\end{align}
each of which was previously shown to be equal to $m\log_{2}K$ (see
\cite[Eqs.~(78)--(80)]{YHHHOS09} and \cite[Eqs.~(162)--(164)]{STW16},
respectively). These representations are in terms of information quantities
which can be bounded from above by the dimensions of the key systems, so that
we can employ uniform continuity estimates \cite{Winter15}\ for them in which
the only dimension terms appearing are those of the key systems.

We begin by considering the first multipartite squashed entanglement in \eqref{eq:I_Squashed}.

\begin{lemma}
\label{lem:log-K-to-info-measures-multi-1}Let $\gamma_{A_{1}\cdots A_{m}%
A_{1}^{\prime}\cdots A_{m}^{\prime}}$ be a multipartite private state and let
$\gamma_{A_{1}\cdots A_{m}A_{1}^{\prime}\cdots A_{m}^{\prime}E}$ be an
extension of it, as defined in Section~\ref{sec:private-states}. Then the
following identity holds for any such extension:%
\begin{equation}
m\log_{2} K=\sum_{i=2}^{m}H(A_{i}|A_{i}^{\prime}EA_{1})_{\gamma}+\sum
_{i=2}^{m}I(A_{1};A_{i}A_{i}^{\prime}|E)_{\gamma} -H(A_{2}\cdots A_{m}%
|EA_{1}A_{1}^{\prime}\cdots A_{m}^{\prime})_{\gamma}.
\end{equation}

\end{lemma}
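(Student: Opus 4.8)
The plan is to mimic the structure of the proof of Lemma~\ref{lem:log-K-to-info-measures}, but now working with the multipartite conditional total correlation $I(A_1;\cdots;A_m|E)$ instead of the bipartite conditional mutual information. First I would write out
\begin{equation}
I(A_{1}A_{1}^{\prime};\cdots;A_{m}A_{m}^{\prime}|E)_{\gamma}=\sum_{i=1}^{m}H(A_{i}A_{i}^{\prime}|E)_{\gamma}-H(A_{1}\cdots A_{m}A_{1}^{\prime}\cdots A_{m}^{\prime}|E)_{\gamma},
\end{equation}
and separately expand $I(A_{1}^{\prime};\cdots;A_{m}^{\prime}|EA_{1})_{\gamma}$ using its definition \eqref{eq:cmmi_I}. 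Subtracting, the first main task is to show the difference equals $m\log_2 K$; this is exactly the content of \cite[Eqs.~(78)--(80)]{YHHHOS09}, which I would reproduce. The second, and really the point of the lemma, is to re-organize that equal-to-$m\log_2 K$ expression into the claimed form, in which the only entropy terms with unbounded (shield-dimension-dependent) systems appear \emph{conditioned} in a way that their dimension contribution cancels — i.e.\ each surviving term is either an $H(A_i|\cdots)$ with $A_i$ a key system, or an $I(A_1;A_iA_i^{\prime}|E)$ with first argument a key system, or the single negative term $-H(A_2\cdots A_m|EA_1A_1^{\prime}\cdots A_m^{\prime})$ whose systems past the conditioning bar are all present so it behaves well under continuity.

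The key structural inputs I would use are the explicit form of a multipartite private state and its extensions from Section~\ref{sec:private-states}: writing $\gamma_{A_1\cdots A_mA_1^{\prime}\cdots A_m^{\prime}E}=U(\Phi_{A_1\cdots A_m}\otimes\sigma_{A_1^{\prime}\cdots A_m^{\prime}E})U^{\dag}$ with $U$ a twisting unitary controlled on the $A_i$ registers. The crucial observations, exactly paralleling the bipartite proof, are: (i) tracing out all but one key system $A_i$ leaves a classical--quantum state $\frac1K\sum_j |j\rangle\langle j|_{A_i}\otimes\gamma^{j}_{A_i^{\prime}\cdots}$, so $H(A_i X)_\gamma=\log_2 K+H(X|A_i)_\gamma$ for appropriate subsystems $X$; (ii) since the twisting unitary is controlled, the marginal on $E$ (and more generally on any subset of shield systems) is independent of the key-register value, giving $H(E)_\gamma=H(E|A_1)_\gamma$ and similar identities; and (iii) the global entropy $H(A_1\cdots A_mA_1^{\prime}\cdots A_m^{\prime}E)_\gamma=H(\Phi_{A_1\cdots A_m})+H(\sigma_{A_1^{\prime}\cdots A_m^{\prime}E})$ by the tensor-product structure, with $H(\Phi_{A_1\cdots A_m})=0$. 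I would substitute these into the expanded difference and collect terms; the chain rule for conditional entropy and possibly the chain rule \eqref{eq:I_Chain} for the conditional total correlation will be used to bundle the $A_i^{\prime}$ registers with $A_i$ into the forms $H(A_i|A_i^{\prime}EA_1)$ and $I(A_1;A_iA_i^{\prime}|E)$.

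The main obstacle I anticipate is purely bookkeeping: tracking which systems are traced out in each marginal, making sure that every shield register $A_i^{\prime}$ that appears does so either conditioned on a key register carrying the same classical label (so the $\log_2 K$'s add up correctly to $m\log_2 K$ and the shield-dimension dependence cancels) or bundled into a symmetric conditional mutual information. In the bipartite case this was a two-line computation because there was only one shield pair; here the asymmetric singling-out of $A_1$ (to condition on) versus $A_2,\ldots,A_m$ (summed over) means the algebra is less symmetric and one must be careful that the final expression is genuinely the difference $I(A_1A_1^{\prime};\cdots;A_mA_m^{\prime}|E)-I(A_1^{\prime};\cdots;A_m^{\prime}|EA_1)$ and not some other regrouping. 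I would double-check the final identity by specializing to $m=2$, where it should reduce (after using $I(A_1';A_2'|EA_1)$ being a two-party CMI and the chain rule) to Lemma~\ref{lem:log-K-to-info-measures}, thereby giving a sanity check on the coefficients.
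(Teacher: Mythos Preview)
Your plan is essentially the paper's: use the identity $I(A_{1}A_{1}^{\prime};\cdots;A_{m}A_{m}^{\prime}|E)_{\gamma}-I(A_{1}^{\prime};\cdots;A_{m}^{\prime}|EA_{1})_{\gamma}=m\log_{2}K$ from \cite{YHHHOS09}, and then rewrite that same difference as the right-hand side of the lemma. One point worth sharpening: in the paper the rewriting step is carried out using only the chain rule \eqref{eq:I_Chain} and the definition \eqref{eq:cmmi_I}, with \emph{no} appeal to the private-state structure (i)--(iii). Thus the equality between the difference of conditional total correlations and the right-hand side holds for an \emph{arbitrary} state, and this is precisely what is reused (for $\omega$, not $\gamma$) in the last equality of the proof of Theorem~\ref{thm:multipartite-bound-1}. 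If you fold the private-state identities into the rearrangement you will still prove the lemma, but you will have to redo the algebra for general states when you reach the theorem; keeping the two halves separate, as the paper does, is cleaner.

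A small correction to your observation (ii): the marginal on a \emph{proper} subset of the shield systems is in general \emph{not} independent of the key value, because the twisting unitaries $U^{i,\ldots,i}_{A_{1}^{\prime}\cdots A_{m}^{\prime}}$ act across all shields jointly. Only the marginal on $E$ alone (or on the full block $A_{1}^{\prime}\cdots A_{m}^{\prime}E$) is guaranteed to be independent of $i$. This does not derail your argument, since the identities actually needed are of that latter type, but you should not invoke the stronger claim.
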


\begin{proof}
The following identity holds for multipartite private states
\cite[Eqs.~(78)--(80)]{YHHHOS09}:%
\begin{equation}
I(A_{1}A_{1}^{\prime};\cdots;A_{m}A_{m}^{\prime}|E)_{\gamma}=m\log_{2}
K+I(A_{1}^{\prime};\cdots;A_{m}^{\prime}|EA_{1})_{\gamma}. \label{eq:yang-id}%
\end{equation}
Now, consider that%
\begin{align}
&  I(A_{1}A_{1}^{\prime};\cdots;A_{m}A_{m}^{\prime}|E)_{\gamma}-I(A_{1}%
^{\prime};\cdots;A_{m}^{\prime}|EA_{1})_{\gamma}\nonumber\\
&  =I(A_{1}^{\prime};A_{2}A_{2}^{\prime};\cdots;A_{m}A_{m}^{\prime}%
|EA_{1})_{\gamma}+\sum_{i=2}^{m}I(A_{1};A_{i}A_{i}^{\prime}|E)_{\gamma
}-I(A_{1}^{\prime};\cdots;A_{m}^{\prime}|EA_{1})_{\gamma}%
\label{eq:multi-step-1}\\
&  =H(A_{1}^{\prime}|EA_{1})_{\gamma}+\sum_{i=2}^{m}H(A_{i}A_{i}^{\prime
}|EA_{1})_{\gamma}-H(A_{1}^{\prime}A_{2}A_{2}^{\prime}\cdots A_{m}%
A_{m}^{\prime}|EA_{1})_{\gamma}\nonumber\\
&  \qquad+\sum_{i=2}^{m}I(A_{1};A_{i}A_{i}^{\prime}|E)_{\gamma}-\left[
H(A_{1}^{\prime}|EA_{1})_{\gamma}+\sum_{i=2}^{m}H(A_{i}^{\prime}%
|EA_{1})_{\gamma}-H(A_{1}^{\prime}\cdots A_{m}^{\prime}|EA_{1})_{\gamma
}\right] \\
&  =\sum_{i=2}^{m}H(A_{i}A_{i}^{\prime}|EA_{1})_{\gamma}-H(A_{1}^{\prime}%
A_{2}A_{2}^{\prime}\cdots A_{m}A_{m}^{\prime}|EA_{1})_{\gamma}+\sum_{i=2}%
^{m}I(A_{1};A_{i}A_{i}^{\prime}|E)_{\gamma}\nonumber\\
&  \qquad-\sum_{i=2}^{m}H(A_{i}^{\prime}|EA_{1})_{\gamma}+H(A_{1}^{\prime
}\cdots A_{m}^{\prime}|EA_{1})_{\gamma}\\
&  =\sum_{i=2}^{m}H(A_{i}|A_{i}^{\prime}EA_{1})_{\gamma}-H(A_{2}\cdots
A_{m}|EA_{1}A_{1}^{\prime}\cdots A_{m}^{\prime})_{\gamma}+\sum_{i=2}%
^{m}I(A_{1};A_{i}A_{i}^{\prime}|E)_{\gamma}. \label{eq:multi-step-last}%
\end{align}
The first equality follows from \eqref{eq:I_Chain}. The second equality
follows by expanding the multipartite information quantities using their
definitions. The last equality follows because%
\begin{align}
H(A_{i}A_{i}^{\prime}|EA_{1})_{\gamma}-H(A_{i}^{\prime}|EA_{1})_{\gamma}  &
=H(A_{i}|A_{i}^{\prime}EA_{1})_{\gamma},\\
-H(A_{1}^{\prime}A_{2}A_{2}^{\prime}\cdots A_{m}A_{m}^{\prime}|EA_{1}%
)_{\gamma}+H(A_{1}^{\prime}\cdots A_{m}^{\prime}|EA_{1})_{\gamma}  &
=-H(A_{2}\cdots A_{m}|EA_{1}A_{1}^{\prime}\cdots A_{m}^{\prime})_{\gamma}.
\end{align}
Putting \eqref{eq:multi-step-1}--\eqref{eq:multi-step-last} together with
\eqref{eq:yang-id} gives the statement of the lemma.
\end{proof}

\begin{theorem}
\label{thm:multipartite-bound-1}Let $\gamma_{A_{1}\cdots A_{m}A_{1}^{\prime
}\cdots A_{m}^{\prime}}$ be a multipartite private state, as defined in
Section~\ref{sec:private-states}, and let $\omega_{A_{1}\cdots A_{m}%
A_{1}^{\prime}\cdots A_{m}^{\prime}}$ be an $\varepsilon$-approximate private
state, in the sense that%
\begin{equation}
F(\gamma_{A_{1}\cdots A_{m}A_{1}^{\prime}\cdots A_{m}^{\prime}},\omega
_{A_{1}\cdots A_{m}A_{1}^{\prime}\cdots A_{m}^{\prime}})\geq1-\varepsilon
\end{equation}
for $\varepsilon\in\left[  0,1\right]  $. Then%
\begin{equation}
\frac{m}{2}\log_{2}K\leq E_{\operatorname{sq}}(A_{1}A_{1}^{\prime}%
;\cdots;A_{m}A_{m}^{\prime})_{\omega}+f_{2}(\sqrt{\varepsilon},K),
\end{equation}
where%
\begin{equation}
f_{2}(\varepsilon,K,m)\equiv m\left[  b_{1}\varepsilon\log_{2}K+b_{2}%
(1+\varepsilon)h_{2}(\varepsilon/\left[  1+\varepsilon\right]  )\right]  ,
\end{equation}
for some constants $b_{1},b_{2}\in\mathbb{Z}^{+}$.
\end{theorem}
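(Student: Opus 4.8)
The plan is to mirror the structure of the proof of Theorem~\ref{thm:bipartite-bound}, replacing the bipartite identity from Lemma~\ref{lem:log-K-to-info-measures} with the multipartite identity from Lemma~\ref{lem:log-K-to-info-measures-multi-1}. First I would fix an arbitrary extension $\omega_{A_{1}\cdots A_{m}A_{1}^{\prime}\cdots A_{m}^{\prime}E}$ of $\omega_{A_{1}\cdots A_{m}A_{1}^{\prime}\cdots A_{m}^{\prime}}$ and invoke \eqref{eq:uhlmann-extend} together with \eqref{eq:fid-trace} to obtain an extension $\gamma_{A_{1}\cdots A_{m}A_{1}^{\prime}\cdots A_{m}^{\prime}E}$ of the ideal private state $\gamma$ with $\frac{1}{2}\left\Vert \gamma-\omega\right\Vert _{1}\leq\sqrt{\varepsilon}$ on the full $m+1$-party system. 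By monotonicity of trace distance under partial trace, the same bound $\sqrt{\varepsilon}$ holds for every marginal appearing in the identity of Lemma~\ref{lem:log-K-to-info-measures-multi-1}.

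Next I would apply Lemma~\ref{lem:log-K-to-info-measures-multi-1} to $\gamma$ to write
\begin{equation}
m\log_{2}K=\sum_{i=2}^{m}H(A_{i}|A_{i}^{\prime}EA_{1})_{\gamma}+\sum_{i=2}^{m}I(A_{1};A_{i}A_{i}^{\prime}|E)_{\gamma}-H(A_{2}\cdots A_{m}|EA_{1}A_{1}^{\prime}\cdots A_{m}^{\prime})_{\gamma},
\end{equation}
and then bound each $\gamma$-term by the corresponding $\omega$-term up to a continuity penalty. The crucial point is that every entropic quantity on the right-hand side is conditioned on a system that \emph{contains} a key register, so the dimension factor in the continuity bound is controlled by $\log_{2}K$ (and at most by $m\log_{2}K$ via the GHZ structure), never by the shield dimensions: for each of the $m-1$ terms of the form $H(A_{i}|A_{i}^{\prime}EA_{1})$ one uses \eqref{eq:AFW-ineq} with $\dim(\mathcal{H}_{A_{i}})=K$; for each of the $m-1$ terms $I(A_{1};A_{i}A_{i}^{\prime}|E)$ one uses \eqref{eq:CMI-cont} with $\min\{\dim(\mathcal{H}_{A_{1}}),\dim(\mathcal{H}_{A_{i}A_{i}^{\prime}})\}=\dim(\mathcal{H}_{A_{1}})=K$; and for the final negative term $H(A_{2}\cdots A_{m}|EA_{1}A_{1}^{\prime}\cdots A_{m}^{\prime})$ one uses \eqref{eq:AFW-ineq} with $\dim(\mathcal{H}_{A_{2}\cdots A_{m}})=K^{m-1}$, contributing a factor proportional to $(m-1)\log_{2}K$. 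Summing these $O(m)$ penalties, each of which is a constant times $\sqrt{\varepsilon}\log_{2}K$ plus a constant times $(1+\sqrt{\varepsilon})h_{2}(\sqrt{\varepsilon}/[1+\sqrt{\varepsilon}])$, yields a total bounded by $2f_{2}(\sqrt{\varepsilon},K,m)$ for suitable absolute constants $b_{1},b_{2}\in\mathbb{Z}^{+}$ absorbing the $m$-dependence.

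Having replaced $\gamma$ by $\omega$ in the identity, I would then run the chain-rule manipulation \eqref{eq:multi-step-1}--\eqref{eq:multi-step-last} in reverse on the $\omega$-state, together with the nonnegativity of $\widetilde{I}$ (equivalently, discarding the nonnegative term $I(A_{1}^{\prime};\cdots;A_{m}^{\prime}|EA_{1})_{\omega}\geq0$ coming from strong subadditivity, exactly as $I(A^{\prime};B^{\prime}|AE)_{\omega}\geq0$ was used in the bipartite case), to conclude
\begin{equation}
m\log_{2}K\leq I(A_{1}A_{1}^{\prime};\cdots;A_{m}A_{m}^{\prime}|E)_{\omega}+2f_{2}(\sqrt{\varepsilon},K,m).
\end{equation}
Since this holds for every extension of $\omega$, taking the infimum over extensions and dividing by $2$ gives $\tfrac{m}{2}\log_{2}K\leq E_{\operatorname{sq}}(A_{1}A_{1}^{\prime};\cdots;A_{m}A_{m}^{\prime})_{\omega}+f_{2}(\sqrt{\varepsilon},K,m)$. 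The main obstacle is purely bookkeeping: verifying that none of the $O(m)$ continuity applications ever picks up a shield-dimension factor, which is precisely why Lemma~\ref{lem:log-K-to-info-measures-multi-1} was engineered to put every term in a form where the relevant unconditioned system is a key register (or a product of key registers), and then collecting the resulting constants into the claimed $b_{1},b_{2}$.
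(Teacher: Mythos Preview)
Your proposal is correct and follows essentially the same approach as the paper's own proof: fix an extension of $\omega$, use \eqref{eq:uhlmann-extend}--\eqref{eq:fid-trace} to obtain a close extension of $\gamma$, apply Lemma~\ref{lem:log-K-to-info-measures-multi-1} to $\gamma$, swap $\gamma$ for $\omega$ via the continuity bounds \eqref{eq:AFW-ineq} and \eqref{eq:CMI-cont} (all of which involve only key-system dimensions), add back the nonnegative $I(A_{1}^{\prime};\cdots;A_{m}^{\prime}|EA_{1})_{\omega}$, and reverse \eqref{eq:multi-step-1}--\eqref{eq:multi-step-last} to recover $I(A_{1}A_{1}^{\prime};\cdots;A_{m}A_{m}^{\prime}|E)_{\omega}$ before taking the infimum. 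The only slip is cosmetic: the nonnegative term you discard is $I(A_{1}^{\prime};\cdots;A_{m}^{\prime}|EA_{1})_{\omega}$, not $\widetilde{I}$, as your parenthetical already clarifies.
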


\begin{proof}
By \eqref{eq:uhlmann-extend} and \eqref{eq:fid-trace}, for a given extension
$\omega_{A_{1}\cdots A_{m}A_{1}^{\prime}\cdots A_{m}^{\prime}E}$ of
$\omega_{A_{1}\cdots A_{m}A_{1}^{\prime}\cdots A_{m}^{\prime}}$, there exists
an extension $\gamma_{A_{1}\cdots A_{m}A_{1}^{\prime}\cdots A_{m}^{\prime}E}$
of $\gamma_{A_{1}\cdots A_{m}A_{1}^{\prime}\cdots A_{m}^{\prime}}$ such that%
\begin{equation}
\frac{1}{2}\left\Vert \gamma_{A_{1}\cdots A_{m}A_{1}^{\prime}\cdots
A_{m}^{\prime}E}-\omega_{A_{1}\cdots A_{m}A_{1}^{\prime}\cdots A_{m}^{\prime
}E}\right\Vert _{1}\leq\sqrt{\varepsilon}.
\end{equation}
We then find that%
\begin{align}
m\log_{2}K &  =\sum_{i=2}^{m}H(A_{i}|A_{i}^{\prime}EA_{1})_{\gamma}+\sum
_{i=2}^{m}I(A_{1};A_{i}A_{i}^{\prime}|E)_{\gamma}-H(A_{2}\cdots A_{m}%
|EA_{1}A_{1}^{\prime}\cdots A_{m}^{\prime})_{\gamma}\\
&  \leq\sum_{i=2}^{m}H(A_{i}|A_{i}^{\prime}EA_{1})_{\omega}+\sum_{i=2}%
^{m}I(A_{1};A_{i}A_{i}^{\prime}|E)_{\omega}\nonumber\\
&  \qquad-H(A_{2}\cdots A_{m}|EA_{1}A_{1}^{\prime}\cdots A_{m}^{\prime
})_{\omega}+2f_{2}(\sqrt{\varepsilon},K,m)\\
&  \leq\sum_{i=2}^{m}H(A_{i}|A_{i}^{\prime}EA_{1})_{\omega}+\sum_{i=2}%
^{m}I(A_{1};A_{i}A_{i}^{\prime}|E)_{\omega}-H(A_{2}\cdots A_{m}|EA_{1}%
A_{1}^{\prime}\cdots A_{m}^{\prime})_{\omega}\nonumber\\
&  \qquad+I(A_{1}^{\prime};\cdots;A_{m}^{\prime}|EA_{1})_{\omega}+2f_{2}%
(\sqrt{\varepsilon},K,m)\\
&  =I(A_{1}A_{1}^{\prime};\cdots;A_{m}A_{m}^{\prime}|E)_{\omega}+2f_{2}%
(\sqrt{\varepsilon},K,m).
\end{align}
The first equality follows from Lemma~\ref{lem:log-K-to-info-measures-multi-1}%
. The first inequality follows from several applications of
\eqref{eq:AFW-ineq}\ and\ \eqref{eq:CMI-cont}. The second inequality follows
because $I(A_{1}^{\prime};\cdots;A_{m}^{\prime}|EA_{1})_{\omega}\geq0$. The
last equality is a consequence of
\eqref{eq:multi-step-1}--\eqref{eq:multi-step-last}, which clearly apply to an
arbitrary state. Since the inequality%
\begin{equation}
m\log_{2}K\leq I(A_{1}A_{1}^{\prime};\cdots;A_{m}A_{m}^{\prime}|E)_{\omega
}+2f_{2}(\sqrt{\varepsilon},K,m)
\end{equation}
holds for any extension of $\omega$, the statement of the theorem follows.
\end{proof}

\bigskip We now handle the other multipartite squashed entanglement from \eqref{eq:S_Squashed}.

\begin{lemma}
\label{lem:log-K-to-info-measures-multi-2}Let $\gamma_{A_{1}\cdots A_{m}%
A_{1}^{\prime}\cdots A_{m}^{\prime}}$ be a multipartite private state, and let
$\gamma_{A_{1}\cdots A_{m}A_{1}^{\prime}\cdots A_{m}^{\prime}E}$ be an
extension of it, as defined in Section~\ref{sec:private-states}. Then the
following identity holds for any such extension:%
\begin{multline}
m\log_{2}K=H(A_{2}\cdots A_{m}|EA_{1}A_{2}^{\prime}\cdots A_{m}^{\prime
})_{\gamma}-\sum_{i=2}^{m}H(A_{i}|EA_{1}A_{\left[  m\right]  }^{\prime
})_{\gamma}\\
+\sum_{i=2}^{m}I(A_{i}A_{i}^{\prime};A_{\left[  m\right]  \backslash
\{i,1\}}|EA_{1}A_{\left[  m\right]  \backslash\{i\}}^{\prime})_{\gamma
}+I(A_{1};A_{2}A_{2}^{\prime}\cdots A_{m}A_{m}^{\prime}|E)_{\gamma}.
\end{multline}

\end{lemma}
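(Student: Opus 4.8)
The plan is to mimic the proof of Lemma~\ref{lem:log-K-to-info-measures-multi-1}, but now using the dual total correlation $\widetilde{I}$ and its chain rule \eqref{eq:S_Chain} in place of $I$ and \eqref{eq:I_Chain}. The starting point is the known identity for multipartite private states, analogous to \eqref{eq:yang-id}, namely
\begin{equation}
\widetilde{I}(A_{1}A_{1}^{\prime};\cdots;A_{m}A_{m}^{\prime}|E)_{\gamma}=m\log_{2}K+\widetilde{I}(A_{1}^{\prime};\cdots;A_{m}^{\prime}|EA_{1})_{\gamma},
\end{equation}
which is \cite[Eqs.~(162)--(164)]{STW16}. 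So it suffices to show that the difference $\widetilde{I}(A_{1}A_{1}^{\prime};\cdots;A_{m}A_{m}^{\prime}|E)_{\gamma}-\widetilde{I}(A_{1}^{\prime};\cdots;A_{m}^{\prime}|EA_{1})_{\gamma}$ equals the claimed right-hand side, and this is a pure identity valid for any state.

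First I would apply the chain rule \eqref{eq:S_Chain} to the term $\widetilde{I}(A_{1}A_{1}^{\prime};\cdots;A_{m}A_{m}^{\prime}|E)_{\gamma}$, pulling the system $A_{1}$ out of the first slot $A_{1}A_{1}^{\prime}$: this produces $\widetilde{I}(A_{1}^{\prime};A_{2}A_{2}^{\prime};\cdots;A_{m}A_{m}^{\prime}|EA_{1})_{\gamma}+I(A_{1};A_{2}A_{2}^{\prime}\cdots A_{m}A_{m}^{\prime}|E)_{\gamma}$, which already accounts for the last term of the target expression. Next I would expand both $\widetilde{I}(A_{1}^{\prime};A_{2}A_{2}^{\prime};\cdots;A_{m}A_{m}^{\prime}|EA_{1})$ and $\widetilde{I}(A_{1}^{\prime};\cdots;A_{m}^{\prime}|EA_{1})$ using the form \eqref{eq:cmmi_S_2} of the dual total correlation (the one written as a joint conditional entropy minus a sum of single-system conditional entropies), conditioning everything on $EA_{1}$. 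When I subtract, the joint-entropy terms $H(A_{1}^{\prime}A_{2}A_{2}^{\prime}\cdots A_{m}A_{m}^{\prime}|EA_{1})$ and $H(A_{1}^{\prime}\cdots A_{m}^{\prime}|EA_{1})$ combine via the chain rule into $-H(A_{2}\cdots A_{m}|EA_{1}A_{2}^{\prime}\cdots A_{m}^{\prime})$ (after a small rearrangement using that the primed systems are present on both sides — this gives the first term of the target). The remaining single-system conditional-entropy terms: from the first $\widetilde{I}$ one gets $-H(A_{1}^{\prime}|\cdots) - \sum_{i=2}^{m}H(A_{i}A_{i}^{\prime}|A_{[m]\setminus\{i,1\}}\text{(mixed)}\,EA_{1})$-type terms and from the second $+H(A_{1}^{\prime}|\cdots)+\sum_{i=2}^{m}H(A_{i}^{\prime}|\cdots)$-type terms; the $A_{1}^{\prime}$ contributions cancel, and for each $i\ge 2$ the combination $-H(A_{i}A_{i}^{\prime}|\text{rest})+H(A_{i}^{\prime}|\text{rest}')$ has to be massaged into $-H(A_{i}|EA_{1}A_{[m]}^{\prime})+I(A_{i}A_{i}^{\prime};A_{[m]\setminus\{i,1\}}|EA_{1}A_{[m]\setminus\{i\}}^{\prime})$, which is the remaining middle pair of terms.

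The main obstacle is bookkeeping: the conditioning systems in the various single-system entropy terms coming from the two expansions of $\widetilde{I}$ are \emph{not} the same — in $\widetilde{I}(A_{1}^{\prime};A_{2}A_{2}^{\prime};\cdots;A_{m}A_{m}^{\prime}|EA_{1})$ the term for slot $i$ is conditioned on all the other composite slots (so on all primed and unprimed $A_{j}$, $j\neq i,1$, plus $A_{1}^{\prime}$), whereas in $\widetilde{I}(A_{1}^{\prime};\cdots;A_{m}^{\prime}|EA_{1})$ the term for $A_{i}^{\prime}$ is conditioned only on the other primed systems. Reconciling these and recognizing the difference as a conditional mutual information $I(A_{i}A_{i}^{\prime};A_{[m]\setminus\{i,1\}}|EA_{1}A_{[m]\setminus\{i\}}^{\prime})$ plus a leftover $-H(A_{i}|EA_{1}A_{[m]}^{\prime})$ requires carefully adding and subtracting the right entropies (using $I(X;Y|Z)=H(X|Z)-H(X|YZ)$ and the chain rule $H(A_iA_i'|Z)=H(A_i'|Z)+H(A_i|A_i'Z)$). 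Once the terms are correctly grouped, the identity falls out; I would present this as a displayed chain of equalities paralleling \eqref{eq:multi-step-1}--\eqref{eq:multi-step-last}, with a short list of the elementary entropy identities used to justify the final regrouping, and then combine with the $\widetilde{I}$-version of \eqref{eq:yang-id} to conclude.
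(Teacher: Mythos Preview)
Your approach is exactly the paper's: start from the $\widetilde{I}$ analogue of \eqref{eq:yang-id} (the identity from \cite{STW16}), apply the chain rule \eqref{eq:S_Chain} to peel off $A_{1}$, expand both remaining $\widetilde{I}$ terms via \eqref{eq:cmmi_S_2}, and regroup. One bookkeeping slip to fix when you write it out: the two $A_{1}^{\prime}$ single-entropy contributions do \emph{not} cancel, because their conditioning sets differ --- from the first expansion you get $-H(A_{1}^{\prime}|EA_{1}A_{2}A_{2}^{\prime}\cdots A_{m}A_{m}^{\prime})$ and from the second $+H(A_{1}^{\prime}|EA_{1}A_{2}^{\prime}\cdots A_{m}^{\prime})$, whose sum is $I(A_{1}^{\prime};A_{2}\cdots A_{m}|EA_{1}A_{2}^{\prime}\cdots A_{m}^{\prime})$. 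Likewise the joint-entropy difference is $+H(A_{2}\cdots A_{m}|EA_{1}A_{1}^{\prime}\cdots A_{m}^{\prime})$ (positive sign, and with $A_{1}^{\prime}$ still in the conditioning). These two pieces then combine via $H(X|A_{1}^{\prime}Z)+I(A_{1}^{\prime};X|Z)=H(X|Z)$ to produce the target's first term $H(A_{2}\cdots A_{m}|EA_{1}A_{2}^{\prime}\cdots A_{m}^{\prime})$; this is precisely how the paper handles it. With that correction your sketch matches the paper's proof step for step.
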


\begin{proof}
The following identity holds for an extension of a private state
\cite[Eqs.~(162)--(164)]{STW16}:%
\begin{equation}
\widetilde{I}(A_{1}A_{1}^{\prime};\cdots;A_{m}A_{m}^{\prime}|E)_{\gamma}%
=m\log_{2}K+\widetilde{I}(A_{1}^{\prime};\cdots;A_{m}^{\prime}|EA_{1}%
)_{\gamma}.
\end{equation}
At the same time, we have that%
\begin{align}
&  \widetilde{I}(A_{1}A_{1}^{\prime};\cdots;A_{m}A_{m}^{\prime}|E)_{\gamma
}-\widetilde{I}(A_{1}^{\prime};\cdots;A_{m}^{\prime}|EA_{1})_{\gamma
}\nonumber\\
&  =\widetilde{I}(A_{1}^{\prime};A_{2}A_{2}^{\prime};\cdots;A_{m}A_{m}%
^{\prime}|EA_{1})_{\gamma}+I(A_{1};A_{2}A_{2}^{\prime}\cdots A_{m}%
A_{m}^{\prime}|E)_{\gamma}-\widetilde{I}(A_{1}^{\prime};\cdots;A_{m}^{\prime
}|EA_{1})_{\gamma}\\
&  =H(A_{1}^{\prime}A_{2}A_{2}^{\prime}\cdots A_{m}A_{m}^{\prime}%
|EA_{1})_{\gamma}-H(A_{1}^{\prime}|EA_{1}A_{2}A_{2}^{\prime}\cdots A_{m}%
A_{m}^{\prime})_{\gamma}\nonumber\\
&  \qquad-\sum_{i=2}^{m}H(A_{i}A_{i}^{\prime}|EA_{1}A_{\left[  m\right]
\backslash\{i,1\}}A_{\left[  m\right]  \backslash\{i\}}^{\prime})_{\gamma
}+I(A_{1};A_{2}A_{2}^{\prime}\cdots A_{m}A_{m}^{\prime}|E)_{\gamma}\nonumber\\
&  \qquad-\left[  H(A_{1}^{\prime}\cdots A_{m}^{\prime}|EA_{1})_{\gamma
}-H(A_{1}^{\prime}|EA_{1}A_{2}^{\prime}\cdots A_{m}^{\prime})_{\gamma}%
-\sum_{i=2}^{m}H(A_{i}^{\prime}|EA_{1}A_{\left[  m\right]  \backslash\left\{
i\right\}  }^{\prime})_{\gamma}\right] \\
&  =H(A_{2}\cdots A_{m}|EA_{1}A_{1}^{\prime}\cdots A_{m}^{\prime})_{\gamma
}+I(A_{1}^{\prime};A_{2}\cdots A_{m}|EA_{1}A_{2}^{\prime}\cdots A_{m}^{\prime
})_{\gamma}\nonumber\\
&  \qquad-\sum_{i=2}^{m}H(A_{i}A_{i}^{\prime}|EA_{1}A_{\left[  m\right]
\backslash\{i,1\}}A_{\left[  m\right]  \backslash\{i\}}^{\prime})_{\gamma
}+I(A_{1};A_{2}A_{2}^{\prime}\cdots A_{m}A_{m}^{\prime}|E)_{\gamma}\nonumber\\
&  \qquad+\sum_{i=2}^{m}H(A_{i}^{\prime}|EA_{1}A_{\left[  m\right]
\backslash\left\{  i\right\}  }^{\prime})_{\gamma}
\label{eq:multi-2-last-line}%
\end{align}
The first equality follows from \eqref{eq:S_Chain}. The second equality
follows by expanding using \eqref{eq:cmmi_S_2}. The third equality follows
because%
\begin{align}
H(A_{1}^{\prime}A_{2}A_{2}^{\prime}\cdots A_{m}A_{m}^{\prime}|EA_{1})_{\gamma
}-H(A_{1}^{\prime}\cdots A_{m}^{\prime}|EA_{1})_{\gamma}  &  =H(A_{2}\cdots
A_{m}|EA_{1}A_{1}^{\prime}\cdots A_{m}^{\prime})_{\gamma},\\
-H(A_{1}^{\prime}|EA_{1}A_{2}A_{2}^{\prime}\cdots A_{m}A_{m}^{\prime}%
)_{\gamma}+H(A_{1}^{\prime}|EA_{1}A_{2}^{\prime}\cdots A_{m}^{\prime}%
)_{\gamma}  &  =I(A_{1}^{\prime};A_{2}\cdots A_{m}|EA_{1}A_{2}^{\prime}\cdots
A_{m}^{\prime})_{\gamma}.
\end{align}
Continuing,%
\begin{align}
\eqref{eq:multi-2-last-line}  &  =H(A_{2}\cdots A_{m}|EA_{1}A_{1}^{\prime
}\cdots A_{m}^{\prime})_{\gamma}+I(A_{1}^{\prime};A_{2}\cdots A_{m}%
|EA_{1}A_{2}^{\prime}\cdots A_{m}^{\prime})_{\gamma}\nonumber\\
&  \qquad-\sum_{i=2}^{m}H(A_{i}A_{i}^{\prime}|EA_{1}A_{\left[  m\right]
\backslash\{i\}}^{\prime})_{\gamma}+\sum_{i=2}^{m}I(A_{i}A_{i}^{\prime
};A_{\left[  m\right]  \backslash\{i,1\}}|EA_{1}A_{\left[  m\right]
\backslash\{i\}}^{\prime})_{\gamma}\nonumber\\
&  \qquad+I(A_{1};A_{2}A_{2}^{\prime}\cdots A_{m}A_{m}^{\prime}|E)_{\gamma
}+\sum_{i=2}^{m}H(A_{i}^{\prime}|EA_{1}A_{\left[  m\right]  \backslash\left\{
i\right\}  }^{\prime})_{\gamma}\\
&  =H(A_{2}\cdots A_{m}|EA_{1}A_{2}^{\prime}\cdots A_{m}^{\prime})_{\gamma
}-\sum_{i=2}^{m}H(A_{i}|EA_{1}A_{\left[  m\right]  }^{\prime})_{\gamma
}\nonumber\\
&  \qquad+\sum_{i=2}^{m}I(A_{i}A_{i}^{\prime};A_{\left[  m\right]
\backslash\{i,1\}}|EA_{1}A_{\left[  m\right]  \backslash\{i\}}^{\prime
})_{\gamma}+I(A_{1};A_{2}A_{2}^{\prime}\cdots A_{m}A_{m}^{\prime}|E)_{\gamma}.
\end{align}
The first equality follows because%
\begin{multline}
-\sum_{i=2}^{m}H(A_{i}A_{i}^{\prime}|EA_{1}A_{\left[  m\right]  \backslash
\{i,1\}}A_{\left[  m\right]  \backslash\{i\}}^{\prime})_{\gamma}=-\sum
_{i=2}^{m}H(A_{i}A_{i}^{\prime}|EA_{1}A_{\left[  m\right]  \backslash
\{i\}}^{\prime})_{\gamma}\\
+\sum_{i=2}^{m}I(A_{i}A_{i}^{\prime};A_{\left[  m\right]  \backslash
\{i,1\}}|EA_{1}A_{\left[  m\right]  \backslash\{i\}}^{\prime})_{\gamma},
\end{multline}
and the second because%
\begin{multline}
H(A_{2}\cdots A_{m}|EA_{1}A_{1}^{\prime}\cdots A_{m}^{\prime})_{\gamma
}+I(A_{1}^{\prime};A_{2}\cdots A_{m}|EA_{1}A_{2}^{\prime}\cdots A_{m}^{\prime
})_{\gamma}\\
=H(A_{2}\cdots A_{m}|EA_{1}A_{2}^{\prime}\cdots A_{m}^{\prime})_{\gamma},
\end{multline}%
\begin{equation}
-\sum_{i=2}^{m}H(A_{i}A_{i}^{\prime}|EA_{1}A_{\left[  m\right]  \backslash
\{i\}}^{\prime})_{\gamma}+\sum_{i=2}^{m}H(A_{i}^{\prime}|EA_{1}A_{\left[
m\right]  \backslash\left\{  i\right\}  }^{\prime})_{\gamma}=-\sum_{i=2}%
^{m}H(A_{i}|EA_{1}A_{\left[  m\right]  }^{\prime})_{\gamma}.
\end{equation}
This concludes the proof.
\end{proof}

\bigskip

We state a final theorem without proof, as it goes similarly to the proof of
Theorem~\ref{thm:multipartite-bound-1}.

\begin{theorem}
\label{thm:multipartite-bound-2}Let $\gamma_{A_{1}\cdots A_{m}A_{1}^{\prime
}\cdots A_{m}^{\prime}}$ be a private state and let $\omega_{A_{1}\cdots
A_{m}A_{1}^{\prime}\cdots A_{m}^{\prime}}$ be an $\varepsilon$-approximate
private state, in the sense that%
\begin{equation}
F(\gamma_{A_{1}\cdots A_{m}A_{1}^{\prime}\cdots A_{m}^{\prime}},\omega
_{A_{1}\cdots A_{m}A_{1}^{\prime}\cdots A_{m}^{\prime}})\geq1-\varepsilon
\end{equation}
for $\varepsilon\in\left[  0,1\right]  $. Then%
\begin{equation}
\frac{m}{2}\log_{2}K\leq\widetilde{E}_{\operatorname{sq}}(A_{1}A_{1}^{\prime
};\cdots;A_{m}A_{m}^{\prime})_{\omega}+f_{3}(\sqrt{\varepsilon},K),
\end{equation}
where%
\begin{equation}
f_{3}(\varepsilon,K,m)\equiv m\left[  c_{1}\varepsilon\log_{2}K+c_{2}%
(1+\varepsilon)h_{2}(\varepsilon/\left[  1+\varepsilon\right]  )\right]  ,
\end{equation}
for some constants $c_{1},c_{2}\in\mathbb{Z}^{+}$.
\end{theorem}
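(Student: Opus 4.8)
The plan is to mimic exactly the proof of Theorem~\ref{thm:multipartite-bound-1}, replacing Lemma~\ref{lem:log-K-to-info-measures-multi-1} with Lemma~\ref{lem:log-K-to-info-measures-multi-2}. The crucial feature of the identity in Lemma~\ref{lem:log-K-to-info-measures-multi-2} is that every entropic term on the right-hand side is a conditional entropy or a conditional mutual information in which the \emph{unconditioned} systems are drawn only from the key systems $A_1,\dots,A_m$ (each of dimension $K$); the shield systems $A_1^\prime,\dots,A_m^\prime$ and the purifying system $E$ appear only in conditioning positions. This is what lets the uniform continuity bounds \eqref{eq:AFW-ineq} and \eqref{eq:CMI-cont} be applied with dimension factors involving only $K$, not the (possibly enormous) shield dimensions.

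First I would invoke \eqref{eq:uhlmann-extend} together with \eqref{eq:fid-trace}: given any extension $\omega_{A_1\cdots A_m A_1^\prime\cdots A_m^\prime E}$ of the approximate private state, there is an extension $\gamma_{A_1\cdots A_m A_1^\prime\cdots A_m^\prime E}$ of the exact private state with trace distance at most $\sqrt\varepsilon$ between them (the fidelity lower bound $1-\varepsilon$ is preserved under the purification/extension correspondence, and then converted to a trace-distance bound). Next, I would write $m\log_2 K$ using the identity of Lemma~\ref{lem:log-K-to-info-measures-multi-2} evaluated on $\gamma$. Then, term by term, I would pass from $\gamma$ to $\omega$ using \eqref{eq:AFW-ineq} on the two conditional-entropy terms $H(A_2\cdots A_m|\cdots)$ and $\sum_{i=2}^m H(A_i|\cdots)$ — here $\dim$ of the unconditioned system is $K^{m-1}$ and $K$ respectively, contributing $(m-1)\log_2 K$ and $(m-1)\log_2 K$ worth of dimension factors — and \eqref{eq:CMI-cont} on the $m-1$ conditional-mutual-information terms $I(A_iA_i^\prime;A_{[m]\backslash\{i,1\}}|\cdots)$ and on the term $I(A_1;A_2A_2^\prime\cdots A_mA_m^\prime|E)$; for the latter I would use the $\min$ in \eqref{eq:CMI-cont} and take the side $A_1$, whose dimension is $K$, so again only $\log_2 K$ appears. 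Collecting all these error terms gives a bound of the form $m[c_1\varepsilon\log_2 K + c_2(1+\varepsilon)h_2(\varepsilon/[1+\varepsilon])]$ with absolute positive integer constants $c_1,c_2$, i.e.\ $2f_3(\sqrt\varepsilon,K,m)$ after substituting $\varepsilon\mapsto\sqrt\varepsilon$.

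Having bounded the $\gamma$-quantity by the corresponding $\omega$-quantity plus $2f_3(\sqrt\varepsilon,K,m)$, I would then run the algebraic manipulation \eqref{eq:multi-2-last-line} and its continuation \emph{in reverse} on the state $\omega$ — these identities are purely algebraic consequences of the definitions and the chain rule \eqref{eq:S_Chain}, so they hold for an arbitrary state — together with the inequality $\widetilde I(A_1^\prime;\cdots;A_m^\prime|EA_1)_\omega\ge 0$ (strong subadditivity, via non-negativity of the dual total correlation). This reassembles the right-hand side into $\widetilde I(A_1A_1^\prime;\cdots;A_mA_m^\prime|E)_\omega$, yielding $m\log_2 K\le \widetilde I(A_1A_1^\prime;\cdots;A_mA_m^\prime|E)_\omega + 2f_3(\sqrt\varepsilon,K,m)$. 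Since $\omega_{A_1\cdots A_m A_1^\prime\cdots A_m^\prime E}$ was an arbitrary extension, taking the infimum over extensions and dividing by $2$ gives $\tfrac{m}{2}\log_2 K\le \widetilde E_{\operatorname{sq}}(A_1A_1^\prime;\cdots;A_mA_m^\prime)_\omega + f_3(\sqrt\varepsilon,K,m)$, as claimed.

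The only real obstacle is bookkeeping: one must verify that in Lemma~\ref{lem:log-K-to-info-measures-multi-2} \emph{every} unconditioned system is a key system, so that the continuity estimates never see a shield dimension, and then count carefully how many times each of \eqref{eq:AFW-ineq} and \eqref{eq:CMI-cont} is invoked to pin down admissible constants $c_1,c_2$ (a generous choice such as $c_1=c_2=6$ certainly works, but one could optimize). There is no conceptual difficulty beyond what already appears in Theorem~\ref{thm:multipartite-bound-1}; the structure of the dual-total-correlation identity guarantees the same clean dimension dependence.
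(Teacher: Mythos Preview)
Your proposal is correct and is exactly the approach the paper intends: the paper states Theorem~\ref{thm:multipartite-bound-2} without proof, remarking only that it ``goes similarly to the proof of Theorem~\ref{thm:multipartite-bound-1},'' which is precisely the route you describe via Lemma~\ref{lem:log-K-to-info-measures-multi-2}, uniform continuity, non-negativity of $\widetilde I(A_1^\prime;\cdots;A_m^\prime|EA_1)_\omega$, and reversing the algebraic identities on $\omega$.

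One small correction to your bookkeeping remark: it is not literally true that \emph{every} unconditioned system in Lemma~\ref{lem:log-K-to-info-measures-multi-2} is a key system. In the terms $I(A_iA_i^\prime;A_{[m]\backslash\{i,1\}}|EA_1A_{[m]\backslash\{i\}}^\prime)$ one argument, $A_iA_i^\prime$, contains the shield $A_i^\prime$. What saves you is exactly the $\min$ in \eqref{eq:CMI-cont}: the \emph{other} argument $A_{[m]\backslash\{i,1\}}$ consists purely of key systems and has dimension $K^{m-2}$, so you choose that side. You already invoke this $\min$ for the term $I(A_1;A_2A_2^\prime\cdots A_mA_m^\prime|E)$; just be sure to do the same for the $m-1$ terms in the sum. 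With that adjustment the dimension count goes through and your claimed form of $f_3$ holds.
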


\begin{remark}
Theorems~\ref{thm:multipartite-bound-1} and \ref{thm:multipartite-bound-2} can
be used to establish upper bounds on multipartite distillable key of
multipartite states and broadcast channels, in a way similar to
Theorem~\ref{thm:distillable-key-state} and
Remark~\ref{rem:distillable-key-channel}. See \cite{YHHHOS09} and \cite{STW16}
for details.
\end{remark}

\bigskip

\textbf{Acknowledgements.} I am grateful to Koji Azuma and Stefan Baeuml for
pointing out the main issue discussed in this paper. I am as well thankful to
Koji Azuma, Stefan Baeuml, Saikat Guha, Ryan Gregory James, Masahiro Takeoka, and Stephanie Wehner
for discussions related to the topic of this paper. I thank the anonymous
referees for helpful comments that improved the readability of the paper. I acknowledge support
from the NSF\ under Award No.~CCF-1350397 and thank Stefano Mancini for
hosting me at University of Camerino during late June 2016, where this result
was developed.

\bibliographystyle{alpha}
\bibliography{Ref}

\end{document}